\documentclass[12pt, twoside]{article}
\usepackage{amsmath}
\usepackage{amsthm}
\usepackage{setspace}
\usepackage{verbatim}
\usepackage{lmodern,microtype}
\usepackage{booktabs}
\usepackage{colortbl}
\definecolor{lightgray}{rgb}{0.2,0.2,0.2}
\usepackage{caption}
\usepackage{footmisc}
\usepackage{lipsum}

\usepackage{xcolor}

\usepackage[left=1.4in,right=1.4in,top=1.4in,bottom=1.4in]{geometry}
\usepackage{soul}

\usepackage{xfrac}

\usepackage{amsfonts}

\usepackage{qtree}

\usepackage{tikz}

\usepackage{multirow,array}

\usepackage[unicode=true,pdfusetitle,
bookmarks=true,bookmarksnumbered=false,bookmarksopen=false,
breaklinks=false,pdfborder={0 0 1},backref=true,colorlinks=true,citecolor=blue,linkcolor=blue]
{hyperref}

\usepackage{booktabs} 
\usepackage{subcaption}

\usepackage{latexsym}
\usepackage{float}
\usepackage{graphicx}
\usepackage{enumitem}
\usepackage[font=small,labelsep=none]{caption}

\usepackage[round]{natbib}
\bibpunct{(}{)}{;}{a}{}{}
\setcitestyle{notesep={: }}

\setlength{\bibsep}{.05cm}

\usepackage[title]{appendix}
\usepackage{graphicx}

\frenchspacing
\usepackage{CJK}

\newtheorem*{proposition*}{Proposition}

\newtheorem{proposition}{Proposition}

\usepackage{todonotes}

\newcommand{\lotOneQuant}{L_1}

\newcommand{\lotOneEx}{L_1}

\theoremstyle{definition}

\theoremstyle{definition}

\definecolor{lightgray}{gray}{0.9}

\sethlcolor{lightgray}

\newif\ifanon
\anonfalse

\ifanon
\title{A Dominance Argument Against Incompleteness}
\date{}
\author{}

\else

\title{A Dominance Argument Against Incompleteness}
\date{April 2025}
\author{Christian Tarsney, Harvey Lederman, and Dean Spears}

\fi

\begin{document}
	
	\maketitle

\begin{center}
    Forthcoming in \textit{Philosophical Review}.

    \vspace{4mm}
\end{center}
    
	\begin{abstract}
		This article presents a new argument against many forms of moral and prudential value incompleteness. The argument relies on two central principles: (i) a weak ``negative dominance'' principle, to the effect that Lottery 1 is better than Lottery 2 only if some possible outcome of Lottery 1 is better than some possible outcome of Lottery 2, and (ii) a weak form of \textit{ex ante} Pareto, to the effect that, if Lottery 1 gives an unambiguously better \textcolor{black}{(stochastically dominant)} prospect to some individuals than Lottery 2, and equally good  prospects to everyone else, then Lottery 1 is better than Lottery 2. Given modest auxiliary assumptions, these two principles rule out incompleteness in the prudential ranking of individual lives, and many forms of \textcolor{black}{incompleteness} in the moral rankings of outcomes and lotteries. 
	\end{abstract}

	\textit{Keywords} value theory, population ethics, incompleteness, incomparability, dominance, Pareto, opaque sweetening
	
	\doublespacing 
	
	\section{Introduction}
	\label{s:intro}
	
	It often seems, when comparing two alternatives,   
	that neither is better than the other, nor are they equally good.  
	Suppose that Patricia is deciding whether to pursue a career as an artist or a banker. Being an artist would give her an unlimited creative outlet and the freedom of being her own employer. Being a banker would give her financial stability and regular opportunities for international travel. Even setting aside her many uncertainties and imagining maximally specific ways that each of these lives might go, it seems that neither life would be \emph{better} for Patricia overall---and that they would not be equally good, since a slight but noticeable improvement to one life (say, adding \$1,000 in salary) would not be enough to make it better than the other. Similarly, from an impersonal point of view (``the point of view of the universe'', or of social welfare), the world where Patricia becomes an artist could be neither better nor worse nor equally as good as the world where she becomes a banker.\footnote{Philosophers who have argued that lives can be incomparable in this way include \citeauthor{raz1985value} (\citeyear{raz1986morality,raz1985value}), \citet{chang2002possibility}, and \citet{hedden2023dimensions}, among many others.}
	
	Many philosophers have claimed that a structurally similar phenomenon arises when comparing outcomes in which different individuals exist. Some have claimed that, from a personal point of view, existence is always incomparable with non-existence: no matter how good or bad an individual's life, we can never say that it is better, worse, or equally as good for her as non-existence \citep[168]{broome1999ethics}.  
	Some have claimed that the same is true from an impersonal point of view: two possible worlds which differ only by the existence of one individual are incomparable in terms of overall value, whatever life that individual lives in the world where they exist (\citeauthor{bader2022person} \citeyear{bader2022person,bader2022asymmetry,baderbook}). Others have claimed that this is true only of \emph{some} lives: \textcolor{black}{two worlds which differ only by the existence of one individual are incomparable in terms of overall value if that individual's life contains the right mix of triumphs and tragedies in the world where they exist} \citep{gustafsson2020population,thornley2022critical}.\footnote{\textcolor{black}{Although some philosophers use the term ``incomparable'' more narrowly, we will use it simply to mean that one thing is neither better than, nor worse than, nor equal to another (even though they are both in the domain of the relevant value concept, e.g.\ personal or impersonal betterness). 
		}
	}
	
	This paper presents a new argument against these and other forms of incomparability.\footnote{\textcolor{black}{Our arguments will concern incomparability in the personal/prudential value of lives, and the impersonal/moral value of outcomes and lotteries. For concision, we will use ``value'' without qualification to refer to these particular forms of value. But our argument will not in any obvious} \textcolor{black}{way apply to other possible forms of value, e.g.\ aesthetic or epistemic.}} \textcolor{black}{The argument is based on the following sort of example:} Suppose that two possible lives, like \emph{artist} and \emph{banker}, are indeed incomparable. Let $a$ denote a fully specified artist's life and $b$ a fully specified banker's life, with $a^+$ a slightly improved version of the artist's life that is still incomparable with $b$. Now suppose we face a choice between the following two lotteries, based on the same fair coin flip, where our choice will affect only two individuals, $P_1$ and $P_2$:

	\begin{table}[!htb]
		
		\centering
		
		\begin{minipage}{.3\linewidth}
			\begin{subtable}{\linewidth}
				\centering
				\caption{\textsc{Lottery 1}}
				\begin{tabular}{lcc}
					\toprule
					& Heads & Tails \\
					\midrule
					$P_1$	& $a$ & $b$ \\
					$P_2$	& \textcolor{black}{$a$}   & \textcolor{black}{$b$} \\	
					\bottomrule
				\end{tabular}
			\end{subtable}
			
		\end{minipage}
		\begin{minipage}{.3\linewidth}
			\begin{subtable}{\linewidth}
				\centering
				\caption{\textsc{Lottery 2}}
				\begin{tabular}{lcc}
					\toprule
					& Heads & Tails \\
					\midrule
					$P_1$	& $a^+$ & $b$ \\
					$P_2$	& \textcolor{black}{$b$}	& \textcolor{black}{$a$} \\	
					\bottomrule
				\end{tabular}
			\end{subtable}		
		\end{minipage}
		
	\end{table}
	
	There is, on the one hand, a strong argument that neither lottery is better than the other. First, if $a$ and $b$ are incomparable from a personal point of view, then it seems that if two outcomes differ only by the fact that some individual enjoys life $a$ in the one and life $b$ in the other, those outcomes should be incomparable. Thus, $(a,a)$ and $(b,b)$ are both incomparable with $(a,b)$ and $(b,a)$, where the first and second terms in each ordered pair represent the lives of $P_1$ and $P_2$ respectively. Moreover, it is part and parcel of incomparability that it is preserved by small enough improvements or worsenings, so $(a^+,b)$, which is at most slightly better than $(a,b)$, should likewise be incomparable with $(a,a)$ and $(b,b)$.\footnote{One might think that $(a^+,b)$ is better than $(a,a)$ on the grounds that it is better for someone and worse for no one. But this reasoning, though intuitive, cannot be sound, since it would lead to cycles of betterness: for instance, given incomparable lives $a$ and $b$, we would get that $(a,a)$ is better than $(a^-,b)$, which is better than $(b,b^-)$, which is better than $(b^-,a^+)$, which is better than $(a,a)$. \textcolor{black}{(For an extended exploration of this point and its implications, see \cite{hedden2024parity}.)}\label{fn:superStrongPareto}} We then find that \emph{every} possible outcome of Lottery 1 (the Heads outcome and the Tails outcome) is incomparable with \emph{every} possible outcome of Lottery 2. This suggests that neither lottery is better than the other.
	
	On the other hand, there is \emph{also} a  strong argument that Lottery 2 is better than Lottery 1: Lottery 1 gives both $P_1$ and $P_2$ a 50\% chance of life $a$ and a 50\% chance of life $b$. Lottery 2 does the same, except that it replaces $P_1$'s 50\% chance of life $a$ with a 50\% chance of life $a^+$. Since $a^+$ is better than $a$, it seems that Lottery 2 is better from the perspective of $P_1$, equally good from the perspective of $P_2$, 
    and therefore better overall.
	
	Taken together, these arguments amount to a  
	\emph{reductio} of the supposition that $a$ and $b$ are incomparable. Similar arguments, as we will see, can be mounted against other forms of incomparability in the axiological ranking of finite social outcomes, for instance between larger and smaller populations or between competing desiderata like total welfare, average welfare, and equality.\textcolor{black}{\footnote{Although our arguments and the principles behind them could formally be extended to some infinitary contexts---in particular, to finite-support lotteries over infinite populations---we will restrict our claims to strictly finite contexts for reasons that will become clear in the second half of the paper (see especially n. \ref{infinitenote} in \S \ref{s:rejectingNegativeDominance}).}} The result will be new arguments against many forms of value incomparability, 
    turning on considerations of a quite different character than, e.g., the recent arguments of \citeauthor{dorr2021consequences} (\citeyear{dorr2021consequences,dorr2023case}). \textcolor{black}{The central premises of this \emph{reductio} are two dominance-like principles: first, that if no possible outcome of lottery $\lotOneQuant$ is better than any possible outcome of lottery $L_2$, then $\lotOneQuant$ is not better than $L_2$; and second, that if $\lotOneQuant$ is unambiguously at least as good for every individual as $L_2$, and unambiguously better for some (satisfying restricted forms of weak/strict stochastic dominance), then it is better overall.}  
	We will call the first principle \emph{Negative Dominance} and the second principle \emph{Personal Good}.
	
	In \S \ref{s:negativeDominance}, we formalize and motivate Negative Dominance, comparing it to a stronger principle which plays a central role in Caspar Hare's ``opaque sweetening'' puzzle \citep{hare2010take}. 
	In \S \ref{s:personalGood}, we similarly \textcolor{black}{formalize and motivate} Personal Good. In \S 
	\ref{s:resultsForPersonalIncompleteness}, we offer a more exact version of the above argument, showing that, against modest background assumptions, Negative Dominance and Personal Good  rule out incompleteness in the personal ranking of lives, and by extension those forms of incompleteness in the impersonal ranking of outcomes that derive from incomparability between lives. \S \ref{s:resultsForPersonalCompleteness} then presents a more general argument against incompleteness in the impersonal ranking of outcomes, likewise based on Negative Dominance and Personal Good. \S \ref{s:lotteries} takes the argument a step further by showing that Personal Good, together with plausible auxiliary assumptions, allows the ranking of a large class of lotteries to be reduced to the ranking of outcomes, so that if outcomes are completely ordered by impersonal value, this class of lotteries will be too. 
	\textcolor{black}{\S \ref{s:rejectingNegativeDominance} considers and responds to three potential objections to these arguments.} \S \ref{s:conclusion} summarizes our conclusions. 
	
	\section{Negative Dominance}
	\label{s:negativeDominance}
	
	We will be interested in the implications of certain decision-theoretic principles in an ethical context, so we begin with an abstract decision-theoretic framework which we will subsequently give an ethical interpretation. At the most abstract level, we are interested in rankings of \emph{lotteries} over \emph{outcomes}. \emph{Outcomes} can be understood as complete possible worlds, as equivalence classes of equally good worlds, or as specifications of all evaluatively significant features of a world. \textcolor{black}{A \emph{lottery} is a discrete probability distribution over a finite set of outcomes. 
		(We'll consider \textcolor{black}{lotteries with infinitely many possible outcomes} in \S \ref{s:rejectingNegativeDominance}.)}   
	We assume that lotteries can be compared in terms of \emph{impersonal value}: One lottery can be \emph{impersonally better} than another, and two lotteries can be \emph{equally impersonally good}. If $\lotOneQuant$ is either better than or equal to $L_2$, we say that it's \emph{at least as (impersonally) good as} $L_2$. If neither relation holds between two lotteries, we say that they are \emph{(impersonally) incomparable}.
	\textcolor{black}{Finally, we assume that the impersonal betterness relation on lotteries corresponds to an impersonal betterness relation on outcomes: Outcome $o_1$ is impersonally better than outcome $o_2$ if and only if the lottery that yields $o_1$ with certainty is impersonally better than the lottery that yields $o_2$ with certainty (and likewise for other evaluative relations).}
	
	Our interest here will be in the ethics of population and distribution. In this context, we assume that the evaluatively significant features of an outcome are (i) which individuals exist in that outcome and (ii) how their lives go. Thus, we will identify outcomes with specifications of only these features. In particular, we will assume a countably infinite set of possible individuals (individuals who exist in some possible world/outcome) 
	and a set of possible \emph{lives} (complete specifications of everything that matters, morally or prudentially, about an individual's life---e.g., the totality of their experiences). 
	Importantly, the set of possible ``lives'' includes non-existence (``the empty life''). Given this background, we will represent outcomes as functions from the set of possible individuals to the set of possible lives, understood as specifying which individuals exist in a given outcome and what lives they lead. \textcolor{black}{More specifically, until \S \ref{s:rejectingNegativeDominance}, we will restrict the scope of our discussion to \emph{finite-population} outcomes, meaning that we understand outcomes as always assigning nonexistence to all but finitely many possible individuals.} 
	We will sometimes refer to outcomes as \emph{populations}, when we want to emphasize the fact that they have less information in them than a possible world.
	Finally, alongside the \emph{im}personal betterness relation introduced above, we will assume a \emph{personal} betterness relation on lives: one life can be personally better than another, and two lives can be equally personally good. 
	If life $a$ is either better than or equal to life $b$, we say that it's \emph{at least as (personally) good} as $b$. If neither relation holds between two lives, we say that they are \emph{(personally) incomparable}. (From now on, we will usually omit ``personal'' and ``impersonal'' where it is clear from context which relation is meant.)
	
	A betterness relation is \emph{complete} if no two entities in its domain are incomparable: that is, for any two such entities, either one is better than the other, or they are equally good. 
	
	Given these preliminaries, we can state the principle of Negative Dominance as follows:
	\begin{description}
		\item[Negative Dominance] \textcolor{black}{If none of the possible outcomes of lottery $\lotOneQuant$ are better than any of the possible outcomes of lottery $L_2$, then $\lotOneQuant$ is not better than $L_2$.}
	\end{description}

	\textcolor{black}{This principle is \emph{prima facie} extremely plausible. It's obvious that, if every possible outcome of $\lotOneQuant$ is better than every possible outcome of $L_2$, then $\lotOneQuant$ 
		must be better than $L_2$. It seems nearly as obvious that, if \emph{no} possible outcome of $\lotOneQuant$ is better than \emph{any} possible outcome of $L_2$, then $\lotOneQuant$ is \emph{not} better than $L_2$---which is what Negative Dominance asserts. We therefore think it would be reasonable to treat Negative Dominance as a bedrock principle of decision theory, on a par with other widely accepted dominance principles.}
	
	But there is also a deeper justificatory story to be told in favor of Negative Dominance. \textcolor{black}{The central idea of this story is that the evaluative ranking of outcomes is more fundamental than, and \textcolor{black}{integral to explaining}, the ranking of lotteries.} 
		The value of outcomes is \emph{actual} value---the sort of value possessed by actual joys, achievements, etc. The value of lotteries 
	is something different in kind: A 20\% chance of experiencing a moment of pure bliss is not good \emph{in the same way} as the moment itself would be. \textcolor{black}{Moreover, the thing that ultimately matters in evaluating lotteries, and grounds any comparisons between them, is their potential for actual value---that is, the values and probabilities of their potential outcomes.} \textcolor{black}{\textcolor{black}{In other words,} while outcomes have value in and of themselves (\emph{final} value, in the sense of \cite{korsgaard1983two}), lotteries have value only for the sake of their potential outcomes (a form of nonfinal value).}\footnote{\textcolor{black}{Thanks to an anonymous reviewer for suggesting we put this point in terms of final value.} The picture we sketch is inspired  by similar remarks in \citet[268]{schoenfield2014decision}, who writes: ``It is important to realize that the \emph{ultimate} good, for consequentialists, or in cases in which consequentialist reasoning is appropriate, is an outcome that is valuable, not an outcome that is expectedly valuable. A consequentialist is only interested in \emph{expected} value insofar as it helps her obtain what is actually important: value.''}

\textcolor{black}{This distinction allows us to give a first-pass argument in favor of Negative Dominance: Nonfinal value is derivative of final value. So the comparative nonfinal value of a pair of lotteries must be explicable in terms of the comparative final value of their outcomes. In particular, if $\lotOneQuant$ is (nonfinally) better than $L_2$, then the explanation of this fact must involve some possible outcome of $\lotOneQuant$ being (finally) better than some possible outcome of $L_2$.} 

\textcolor{black}{We can go beyond this rough and ready argument if we adopt a more specific picture of the nonfinal value of lotteries. To this end, we will temporarily suppose, as we think is plausible, that the betterness ranking of lotteries can be understood as an ideal preference ordering.} In particular, we suggest, to say that one lottery is impersonally better than another just means that an agent motivated by impersonal value rationally ought to prefer it---or in other words, that she has \emph{all-things-considered requiring reason} to prefer it.\footnote{For a thorough discussion of how to analyze value relations in terms of permissible preference, see \cite{rabinowicz2008value}---though Rabinowicz applies this analysis to value relations generally, while our analysis is restricted to the value of lotteries. \textcolor{black}{While we ourselves are inclined to see the value of outcomes as a conceptual primitive, 
		our argument is compatible with analyzing the value of outcomes as well as lotteries in terms of ideal or permissible preferences.} 
} 

\textcolor{black}{The claim that the value of lotteries is grounded in the value of outcomes, then, means that \emph{reasons to prefer} one lottery over another must be grounded in the values (and probabilities) of their potential outcomes. 
In forming her preferences between lotteries, an ideally rational agent will be fundamentally concerned with \emph{how well things actually turn out}.} 
Of course, under conditions of uncertainty, the agent doesn't \emph{know} how things will actually turn out. What she does know is how things \emph{might} turn out if a given lottery is chosen, and with what probabilities. The fact that a lottery might turn out a certain way carries normative weight, from her perspective, because it corresponds to the \emph{possibility} of final value.

\textcolor{black}{More specifically, the fact that lottery $\lotOneQuant$ might yield outcome $o_1$, and lottery $L_2$ might yield outcome $o_2$, where $o_1$ is better than $o_2$, constitutes a \emph{pro tanto} requiring reason to prefer $\lotOneQuant$ over $L_2$. (The strength of this reason increases---not necessarily linearly---with the probability that $\lotOneQuant$ would yield $o_1$, the probability that $L_2$ would yield $o_2$, and the difference in value between $o_1$ and $o_2$. If the ranking of outcomes is incomplete, then other evaluative relations between $o_1$ and $o_2$, like parity, might provide \emph{justifying} reasons in favor of preferring either lottery, but not requiring reasons.}\footnote{\textcolor{black}{For exposition of the distinction between the justifying and requiring strength of a reason, see \cite{gert2003requiring}. By ``requiring reason'', in the main text, we mean any reason with at least some requiring strength; by ``justifying reason'', we mean a reason with justifying strength but no requiring strength.}}) \textcolor{black}{If no possible outcome of $L_1$ is better than any possible outcome of $L_2$, then there are no \textit{pro tanto} requiring reasons to prefer $L_1$ over $L_2$, so there cannot be all-things-considered requiring reason to prefer it. So, by the ideal preference analysis of the value of lotteries, $L_1$ cannot be better than $L_2$.}

\textcolor{black}{What the above line of reasoning rules out is that probabilistic combinations or averages of multiple outcomes could constitute \emph{irreducible} reasons to prefer one lottery over another---that is, none of the potential outcomes of $\lotOneEx$ on its own provides any requiring reason to prefer $\lotOneEx$ over $L_2$, but when we combine or average those outcomes together, a requiring reason emerges. That can't happen, we claim, because what ultimately matters is how things actually turn out, and things will actually turn out as one particular outcome, not a probabilistic combination or average of outcomes.} 

\textcolor{black}{We are attracted to the preference-based analysis of the value of lotteries, and think it provides especially strong support for Negative Dominance. But we emphasize that this particular picture is not essential to the more general line of reasoning that favors Negative Dominance. In short: If what ultimately matters in comparing 
lotteries is how well things actually turn out, then the comparative 
value of any two lotteries must be grounded in and explicable in terms of the comparative 
value of the ways each lottery \emph{might} turn out---that is, in one-to-one evaluative comparisons between their possible outcomes.}

It is instructive to contrast Negative Dominance with a stronger principle that has received substantial attention and support in the recent literature. This principle was originally proposed (though not endorsed) by Caspar Hare (\citeyear{hare2010take}). 
Hare works in a slightly different setting than ours, with two different primitive objects: states and prizes. In this setting, probabilities are defined on states, while betterness is defined in the first instance on prizes. \emph{Gambles} are functions from states to prizes. If we think of outcomes as state-prize pairs, then gambles \emph{induce} lotteries, since a probability distribution on states yields, via the gamble, a probability distribution on state-prize pairs. But gambles have more structure than lotteries, allowing us to contrast a possible outcome of a given gamble with what \textit{would have} happened (in the same state) if a different gamble had been chosen. 

The key idea in Hare's principle can be stated as:
\begin{description}
\item[Statewise Negative Dominance] 
\textcolor{black}{If there is no state $s$ in which gamble $G_1$ yields a better prize than gamble $G_2$, then $G_1$ is not better than $G_2$.}\footnote{Hare calls his version of this principle ``Recognition'' (241).}
\end{description}

\textcolor{black}{Though Hare himself ultimately does not endorse this principle, many others in the subsequent literature have endorsed it, including \cite{schoenfield2014decision}, \cite{bales2014decision}, and \cite{doodyms}.} 
If we understand outcomes as state-prize pairs, as above, then Statewise Negative Dominance is strictly stronger than Negative Dominance. \textcolor{black}{So anyone who accepts Statewise Negative Dominance must accept our weaker principle as well.}

\textcolor{black}{That said, there are reasons to doubt Statewise Negative Dominance.} In particular, as \citet{bader2018stochastic} has emphasized, when the betterness relation on outcomes is incomplete, Statewise Negative Dominance conflicts with another, more widely accepted dominance principle:\footnote{ \citet[310]{manzini2008representation} earlier independently presented a similar argument for an incompatibility between Stochastic Dominance (their ``Preference Sure Thing'') and their ``Vague Sure Thing Principle'' (VST). Their VST is even stronger than Statewise Negative Dominance, but the argument for incompatibility is the same.}
\begin{description}
\item[Stochastic Dominance] (i) If for every outcome $o$, $\lotOneQuant$ has at least as great a probability as $L_2$ of yielding an outcome at least as good as $o$, then $\lotOneQuant$ is at least as good as $L_2$. (ii) If in addition, for some $o$, $\lotOneQuant$ has a strictly greater probability of yielding an outcome at least as good as $o$, then $\lotOneQuant$ is better than $L_2$.\footnote{\textcolor{black}{\citet{russellfixing} shows that if the betterness ordering can be incomplete, or can have certain non-standard infinitary structures, then this standard formulation of Stochastic Dominance gives the wrong results and must be revised.} But the difference between the standard formulation and Russell's proposed revision won't matter here.}
\end{description}
When condition (i) is satisfied, we say that $\lotOneQuant$ \emph{weakly stochastically dominates} $L_2$. When condition (ii) is also satisfied, we say that $\lotOneQuant$ \emph{strictly stochastically dominates} (or simply \emph{stochastically dominates}) $L_2$.

Stochastic Dominance is extremely plausible, and among the most widely accepted principles in normative decision theory. (For arguments in its favor, \textcolor{black}{see \citet[\S 3.4.4]{easwaran2014decision}, 
\cite[\S 4]{russellfixing}, and \citet[\S4]{tarsneyFCexpected}}.) It is compatible with a very wide range of attitudes toward risk, being satisfied not just by standard expected utility theory but also by more permissive alternatives like rank-dependent/risk-weighted expected utility \citep{quiggin1982theory,buchak2013risk} and the version of weighted-linear utility theory recently defended by \citet{bottomley2024rational}.\footnote{To put the point more exactly, a classic result from \citet{hadar1969rules} and \citet{hanoch1969efficiency} implies that an expected utility maximizer will satisfy Stochastic Dominance with respect to a given ranking of outcomes if her utility function represents that ranking, i.e., assigns greater utility to one outcome than another iff the first outcome is higher in the ranking.
The same thing turns out to be true of the alternatives to expected utility theory mentioned in the main text.} 
\textcolor{black}{The fact that Statewise Negative Dominance is incompatible with Stochastic Dominance in the presence of incomparability therefore substantially undercuts the \emph{prima facie} plausibility of Statewise Negative Dominance for anyone not already committed to completeness.}

\textcolor{black}{Some will nevertheless respond to this argument by giving up Stochastic Dominance in favor of Statewise Negative Dominance. Others, like us, will take it to show that \textcolor{black}{anyone who believes in evaluative incomparability has good reason to reject Statewise Negative Dominance}. 
But in either case, crucially, the preceding argument does not give any reason to reject Negative Dominance. First, Negative Dominance is consistent with Stochastic Dominance even in the presence of incompleteness (\citeauthor{ledermanmultidimensional} \citeyear{ledermanmultidimensional}: Proposition 3.1; \citeauthor{ledermanmarbles} \citeyear{ledermanmarbles}: \S 7). 
\textcolor{black}{Second, in our view, Negative Dominance is more compellingly motivated than Statewise Negative Dominance. In particular,} our earlier positive argument for Negative Dominance, which turned on grounding comparisons between lotteries in the final value of their potential outcomes, does not support Statewise Negative Dominance and is therefore not undermined by its rejection.} \textcolor{black}{Statewise Negative Dominance insists that comparisons between lotteries must be grounded in comparisons between their outcomes \emph{in the same state of nature}. But nothing in our argument for Negative Dominance motivates this restriction.}\footnote{This claim may seem surprising given that \cite{schoenfield2014decision} and \citet{doodyms} have suggested that Statewise Negative Dominance can be motivated by a primary concern for 
\textcolor{black}{the value of outcomes}. 
But this concern does not justify an insistence on statewise comparisons.
Instead, this insistence represents (in our view) a mistaken concern with \emph{counterfactuals}---with comparing the actual outcome of one's choice to \emph{what would have happened, if one had made a different choice in the same state of nature}. There is no particular reason, insofar as one is \textcolor{black}{ultimately} concerned with \textcolor{black}{\textit{how things actually turn out}}, 
to privilege this counterfactual outcome over all the other possible unrealized outcomes of the gambles under consideration (which are all equally non-actual).}

\section{Personal Good}
\label{s:personalGood}

How should the relation of personal betterness on individual lives connect to the relation of impersonal betterness on outcomes and lotteries? The following principle articulates a weak constraint:
\begin{description}
\item[Pareto] If the same individuals exist in outcomes $o_1$ and $o_2$, and $o_1$ is personally at least as good as $o_2$ for every individual and personally better for at least one individual, then $o_1$ is impersonally better than $o_2$.\footnote{\textcolor{black}{\textcolor{black}{This principle is most plausible} on the assumption that there are no sources of impersonal value in the world besides the personal value of individual lives. Pareto could reasonably be rejected, for instance, by someone who believes that natural beauty is a source of impersonal value that can outweigh differences in welfare or other kinds of personal value. Throughout this paper we make the broadly welfarist assumption that the personal value of individual lives is indeed the sole source of impersonal value. Those who reject that assumption may wish to understand our arguments as concerning not impersonal value but some narrower concept like ``welfarist value''. 
This understanding of our arguments could also sidestep egalitarian and other ``holistic'' worries about Personal Good, a topic to which we'll come shortly. \textcolor{black}{(\citet[\S 3.1]{thomas2023asymmetry} makes a similar point in defending his principle of ``Person-Based Choice'' against egalitarian objections.)}}\label{welfarism}}
\end{description}

This principle only concerns \emph{outcomes}. How might it be extended to lotteries? We will work with the following weak extension:
\begin{description}
\item[Personal Good] Suppose 
every possible individual $i$ has the same probability of existence in lottery $\lotOneQuant$ as in lottery $L_2$. Then, (i) if for every individual $i$ and life $l$, $\lotOneQuant$ gives $i$ at least as great a probability of a life at least as personally good as $l$, compared to $L_2$, then $\lotOneQuant$ is at least as impersonally good as $L_2$; and (ii) if in addition, for some $i$ and $l$, $\lotOneQuant$ gives $i$ a strictly greater probability of a life at least as personally good as $l$, then $\lotOneQuant$ is impersonally better than $L_2$.\footnote{Our arguments below will rely mainly on part (ii) of Personal Good, appealing to part (i) only in \S \ref{s:lotteries}. Personal Good (especially part (i)) is closely related to the principle that \citet{mccarthy2020utilitarianism} call ``Anteriority''.}
\end{description}

This principle extends Pareto to situations of risk, and to situations where different individuals may exist in different outcomes. But it is very modest in both respects. First, it only covers situations where every individual has a fixed probability of existence, saying nothing about choices that affect that probability for even a single individual. 
This means that it is compatible with a wide range of views about how to value nonexistence, including the view that existence and nonexistence are always incomparable. Second, within that restricted setting, Personal Good reflects 
only a minimal claim about the conditions under which one lottery is better than another from a personal perspective: \textcolor{black}{$L_1$ is at least as good as $L_2$ from an individual's perspective if the \emph{personal lottery} (probability distribution over lives) that she receives from $L_1$ weakly stochastically dominates the personal lottery she receives from $L_2$; and it is strictly better if the stochastic dominance relation is strict}. 
As described above, Stochastic Dominance is a minimal and compelling principle for ranking risky alternatives, which is compatible with a very wide range of attitudes toward risk.

\textcolor{black}{That said, there is precedent for rejecting Personal Good. In particular, some egalitarians will reject Personal Good because, while it allows concern for \textit{ex ante} equality (equality between personal lotteries), it is incompatible with natural ways of spelling out an independent concern for \textit{ex post} equality (equality between individual lives in a given outcome) \citep{myerson1981utilitarianism}. For instance, consider the following pair of lotteries:}
\begin{table}[!htb]

\centering

\begin{minipage}{.3\linewidth}
\begin{subtable}{\linewidth}
\centering
\caption{\textsc{Lottery 1}}
\begin{tabular}{lcc}
	\toprule
	& Heads & Tails \\
	\midrule
	$P_1$	& $1$ & $0$ \\
	$P_2$	& $1$   & $0$ \\	
	\bottomrule
\end{tabular}
\end{subtable}

\end{minipage}
\begin{minipage}{.3\linewidth}
\begin{subtable}{\linewidth}
\centering
\caption{\textsc{Lottery 2}}
\begin{tabular}{lcc}
	\toprule
	& Heads & Tails \\
	\midrule
	$P_1$	& $1$ & $0$ \\
	$P_2$	& $0$	& $1$ \\	
	\bottomrule
\end{tabular}
\end{subtable}		
\end{minipage}

\end{table}

\noindent \textcolor{black}{Personal Good implies that these two lotteries are equally good. But since Lottery 1 guarantees perfect \emph{ex post} equality, while Lottery 2 guarantees some \emph{ex post} inequality, one might judge that Lottery 1 is better. Relatedly, someone who places value on \textit{ex post} \emph{diversity} 
might question Personal Good: If we replace ``1'' and ``0'' in the table above with two lives that are different but equally good, one might judge that Lottery 2 is better since it guarantees diversity while Lottery 1 guarantees bland uniformity.}

\textcolor{black}{We will set these objections aside, however, because (as we'll explain in the next section) our impossibility results can be established using examples where the lotteries under comparison are intuitively equivalent in terms of both \textit{ex post} equality and \textit{ex post} diversity. Thus the central argument of this paper, which is built on those results, could be run using a restricted version of Personal Good that only applies when neither equality nor diversity is at stake.}

\section{Incomparable lives} 
\label{s:resultsForPersonalIncompleteness}

We now turn to the central arguments of the paper, showing that the combination of Negative Dominance and Personal Good rules out many forms of value incompleteness. We begin in this section by considering incompleteness in the personal betterness ranking of lives, along with those forms of incompleteness in the impersonal betterness ranking of outcomes that arise from it.

\textcolor{black}{Our arguments will require some auxiliary premises, which we now introduce.} We first make an assumption that we think of as basically a convenience, without significant conceptual import: 

\begin{description}
\item[Richness] For any finite set of possible individuals and any function from those individuals to lives other than non-existence, there is an outcome where exactly those individuals exist, with exactly those lives. Moreover, for any finite set of outcomes and any probability distribution defined on that set, there is a lottery which assigns those probabilities to those outcomes.
\end{description}

We are assuming (as we said earlier) that the set of possible individuals is countably infinite, and that the impersonal betterness relation is defined over the whole set of lotteries. 
We won't need anything like the full strength of Richness; all we  need is that the domain of the betterness relation includes outcomes with the structure of our examples below.

Next, we make an assumption that connects incomparability in personal betterness to incomparability in impersonal betterness:
\begin{description}
\item[Incomparability Transmission] Suppose that outcomes $o_1$ and $o_2$ are identical except that one possible individual receives life $a$ in $o_1$ and life $b$ in $o_2$, where $a$ and $b$ are incomparable in terms of personal value. Then $o_1$ and $o_2$ are incomparable in terms of impersonal value.
\end{description}

This principle is a natural counterpart of Pareto in the context of personal incomparability: 
We compare outcomes by comparing them from the vantage point of each individual, and when only one individual's fate differs between two outcomes, the impersonal betterness ordering should reflect that individual's personal betterness ordering. When the two lives that individual might receive are personally incomparable, their incomparability is ``transmitted'' to the outcomes in which they figure.\footnote{\textcolor{black}{Like 
Personal Good, Incomparability Transmission is open to objections related to the impersonal value of equality and diversity. 
We consider those objections later in this section.}}

Finally, we will assume the following principle about the structure of incomparability on outcomes: 

\begin{description}
\item[Small Improvements] Suppose that outcome $o_1$ is impersonally incomparable with outcomes $o_2$ and $o_3$. Then there is either \begin{enumerate}
\item[(i)] an outcome $o_1^+$ which is better than $o_1$ for some individual, and equally good for everyone else, while still being impersonally incomparable with both $o_2$ and $o_3$, or
\item[(ii)] an outcome $o_1^-$ which is worse than $o_1$ for some individual, and equally good for everyone else, while still being impersonally incomparable with both $o_2$ and $o_3$.
\end{enumerate}
\end{description}

Insensitivity either to small improvements or to small worsenings is a paradigmatic feature of incomparability, by contrast to equal goodness. Thus it should not be controversial that, given incomparable $o_1$ and $o_2$, there is either an $o_1^+$ that improves $o_1$ and is still incomparable with $o_2$, or an $o_1^-$ that worsens $o_1$ and is still incomparable with $o_2$. The above principle adds to this uncontroversial claim two further ideas. First, it adds that we can achieve this improvement (worsening) by improving (worsening) the life of some individual in $o_1$. This is slightly stronger, but we do not expect it to be particularly controversial 
\textcolor{black}{(especially given our broadly welfarist assumption that the only evaluatively significant features of an outcome are which individuals exist and what lives they lead, see n. \ref{welfarism})}. Second, it adds that when $o_1$ is incomparable to \emph{two} further outcomes $o_2$ and $o_3$, there is a single improvement (or worsening) which preserves incomparability with \emph{both} $o_2$ and $o_3$. For instance, if an outcome where Patricia is an artist is incomparable with an outcome where she is a banker and an outcome where she is a CIA operative, then we can either slightly improve or slightly worsen the first outcome (e.g., by slightly adjusting Patricia's income) while preserving both incomparabilities. Like the first, this second claim seems plausible to us, and holds in simple, natural models of value incomparability. In any case, as described below, our argument makes only limited use of Small Improvements, and does not require it to hold in full generality.

Now we can state our first result:

\begin{proposition} 
Given Richness, the principles of Negative Dominance, Personal Good, Incomparability Transmission, and Small Improvements  rule out \textcolor{black}{personal incomparability between lives}.\label{thm:personalIncomp}
\end{proposition}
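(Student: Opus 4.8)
The plan is a \emph{reductio}: assume there are personally incomparable lives $a$ and $b$, and derive a contradiction from the four principles. By Richness I may work with outcomes on two fixed individuals $P_1,P_2$ (all others nonexistent throughout), writing $(x,y)$ for the outcome in which $P_1$ lives $x$ and $P_2$ lives $y$. The first step is to seed the impersonal incomparabilities via Incomparability Transmission: $(a,b)$ is incomparable with $(a,a)$ (they differ only in $P_2$'s life) and with $(b,b)$ (they differ only in $P_1$'s life). Applying Small Improvements to $o_1=(a,b)$, $o_2=(a,a)$, $o_3=(b,b)$ then produces an outcome obtained by improving or worsening the life of one of $P_1,P_2$ that is still incomparable with both $(a,a)$ and $(b,b)$. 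This leaves four cases, according to improvement/worsening and $P_1$/$P_2$.

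I would write out the representative case---an improvement $a^{+}$ of $a$ (personally better than $a$) to $P_1$, so that $(a^{+},b)$ is incomparable with both $(a,a)$ and $(b,b)$---and obtain the others by symmetry. Consider two lotteries, each settled by one fair coin: Lottery~1 yields $(a,a)$ on heads and $(b,b)$ on tails; Lottery~2 yields $(a^{+},b)$ on heads and $(b,a)$ on tails. In both, $P_1$ and $P_2$ exist with certainty and no one else ever exists, so the equal-probability-of-existence hypothesis of Personal Good holds. The personal lottery $P_2$ faces is the $50/50$ mixture over $\{a,b\}$ in each, while $P_1$'s improves from the $50/50$ mixture over $\{a,b\}$ to the $50/50$ mixture over $\{a^{+},b\}$, which strictly stochastically dominates it (witnessed by the threshold life $a^{+}$, since $a^{+}$ is personally better than $a$ and the rest is unchanged). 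Hence Personal Good~(ii) yields that Lottery~2 is impersonally \emph{better} than Lottery~1.

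Against this, I would check the four outcome-to-outcome comparisons needed for Negative Dominance. The outcome $(a^{+},b)$ is not better than $(a,a)$ or than $(b,b)$, directly by the incomparabilities just supplied by Small Improvements; and $(b,a)$ is not better than $(a,a)$ (they differ only in $P_1$) or than $(b,b)$ (they differ only in $P_2$), by Incomparability Transmission applied to the incomparable pair $a,b$. So no possible outcome of Lottery~2 is better than any possible outcome of Lottery~1, whence Negative Dominance gives that Lottery~2 is \emph{not} better than Lottery~1---contradicting the previous paragraph. The three remaining cases use the same template: if the perturbation falls on $P_2$, swap the two individuals' roles (for an improvement $b^{+}$, let Lottery~2 be $(a,b^{+})$ on heads and $(b,a)$ on tails); if the perturbation is a worsening, swap the roles of the two lotteries, placing the diagonal pair $\{(a,a),(b,b)\}$ in the \emph{better} lottery so that Personal Good ranks it strictly above the perturbed lottery while Negative Dominance again forbids this (here the comparison of $(b,b)$ with the worsened outcome is settled directly by the Small Improvements incomparability, not by Incomparability Transmission).

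The step I expect to be the crux---and the one the construction is engineered around---is forcing Personal Good and Negative Dominance to bite at once. Personal Good demands a genuine \emph{strict} stochastic improvement in one individual's personal lottery (with the other's held fixed), which invites placing the perturbed life right beside its unperturbed twin; but Negative Dominance requires that \emph{every} realized outcome of the better lottery fail to be better than every realized outcome of the worse one, which collapses the moment any outcome of the better lottery Pareto-dominates one of the worse. The diagonal lotteries $\{(a,a),(b,b)\}$ dissolve this tension by correlating $P_1$'s and $P_2$'s lives so that the only realized outcomes are the matched ones: the improved outcome $(a^{+},b)$ then never sits directly above a realized outcome of the other lottery, so Small Improvements and Incomparability Transmission suffice to block all four comparisons even though the marginal personal lotteries exhibit strict dominance. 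A secondary point to verify is that the perturbations furnished by Small Improvements stay among \emph{actual} lives (as in the paper's income example), so that no individual's probability of existence changes and Personal Good remains applicable throughout.
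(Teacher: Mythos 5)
Your proof is correct and takes essentially the same approach as the paper's: the identical diagonal-versus-swapped construction with Lottery 1 yielding $(a,a)$/$(b,b)$ and Lottery 2 yielding $(a^{+},b)$/$(b,a)$, with Incomparability Transmission and Small Improvements blocking all four outcome-to-outcome comparisons (so Negative Dominance forbids Lottery 2 from being better) while Personal Good (ii) forces Lottery 2 to be better. Your explicit treatment of the worsening and $P_2$-perturbation cases merely spells out what the paper dismisses with ``the other cases are very similar.''
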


\textcolor{black}{This can be proved using the example from the introduction, restated for ease of reference} in Table \ref{tab:personalIncomp}.
\begin{table}[!htb]

\centering

\begin{minipage}{.3\linewidth}
\begin{subtable}{\linewidth}
	\centering
	\caption{\textsc{Lottery 1}}
	\begin{tabular}{lcc}
		\toprule
		& Heads & Tails \\
		\midrule
		$P_1$	& $a$ & $b$ \\
		$P_2$	& \textcolor{black}{$a$}   & \textcolor{black}{$b$} \\	
		\bottomrule
	\end{tabular}
\end{subtable}

\end{minipage}
\begin{minipage}{.3\linewidth}
\begin{subtable}{\linewidth}
	\centering
	\caption{\textsc{Lottery 2}}
	\begin{tabular}{lcc}
		\toprule
		& Heads & Tails \\
		\midrule
		$P_1$	& $a^+$ & $b$ \\
		$P_2$	& \textcolor{black}{$b$}	& \textcolor{black}{$a$} \\	
		\bottomrule
	\end{tabular}
\end{subtable}		
\end{minipage}

\caption{}
\label{tab:personalIncomp}  
\end{table}

\noindent Again, we have two lotteries affecting two possible individuals, $P_1$ and $P_2$. (We can assume that everyone else is assured of non-existence.) Given two 
incomparable lives $a$ and $b$, Incomparability Transmission implies that the outcomes $(a,a)$ and $(b,b)$ are both incomparable with the outcomes $(a,b)$ and $(b,a)$. 
Small Improvements implies that we can find an outcome that is equally good for one individual and better for the other than $(a,b)$ while still being incomparable with both $(a,a)$ and $(b,b)$, or else an outcome that is equally good for one individual and worse for the other, while still being incomparable with both $(a,a)$ and $(b,b)$. Let's concentrate on the first case, and assume the improvement is to the first individual, giving this person the life $a^+$. (The other cases are very similar.) The outcome $(a^+,b)$ is by assumption still incomparable with $(a,a)$ and $(b,b)$. Richness then 
guarantees that the domain of the impersonal betterness relation includes the two lotteries described in Table \ref{tab:personalIncomp}. Both possible outcomes of Lottery 2 are incomparable with both possible outcomes of Lottery 1, so Negative Dominance implies that Lottery 2 is not better than Lottery 1. But Lottery 2 gives $P_1$ a stochastically dominant personal lottery, and $P_2$ a stochastically equivalent lottery, 
so Personal Good implies that Lottery 2 \emph{is} better than Lottery 1.\footnote{\citet{nebel2020fixed} also presents an opaque sweetening-inspired puzzle in a population-ethical context. His puzzle compares the outcomes $(a,b)$ and $(b^+,a^+)$, where $a$ and $a^+$ are both incomparable with $b$ and $b^+$. He points out that the latter outcome is intuitively better than the former, and that this verdict can be derived from Pareto, anonymity, and transitivity, but that this conflicts with the ``Person-Affecting Restriction’’ (PAR), which claims that one outcome is better than another only if there is someone for whom it's better. Nebel sees this as a reason to give up PAR, and we agree. But giving up PAR does not resolve our puzzle, which uses only the much weaker and more plausible principle of Incomparability Transmission. (Unlike PAR, Incomparability Transmission is consistent with the other principles Nebel  invokes even in the presence of personal incomparability.)

As Nebel does to Hare's argument, it's also possible to generate a structural analogue of our argument that inverts the roles of states and individuals. In this inverted argument, the analogue of Negative Dominance asserts that \emph{if no individual's personal lottery under $L_1$ is better than any individual's personal lottery under $L_2$, then $L_1$ is not better than $L_2$}, while the analogue of Personal Good asserts that \emph{if in at least one state, the outcome of $L_1$ is Suppes-Sen superior to the outcome of $L_2$ (i.e., Pareto superior to a permutation of that outcome), and in all other states, the outcomes of $L_1$ and $L_2$ are identical, then $L_1$ is better than $L_2$}. (To put these principles into conflict, just switch the rows and columns of Lottery 1 in Table \ref{tab:personalIncomp}, with $b$'s in the top row and $a$'s in the bottom. The argument will of course require auxiliary assumptions analogous to those used in Proposition \ref{thm:personalIncomp}. \textcolor{black}{Note also that the second principle requires the introduction of states of nature, which the principles in the main text avoid. The identification of states of nature across different lotteries is analogous to the identification of individuals across different lotteries required by our principle of Personal Good.})
}

This argument establishes, we think, that Negative Dominance and Personal Good are jointly incompatible with the existence of incomparable lives. To briefly consider the other ways out: Richness is only used to secure the existence of two lotteries with the structure of Lottery 1 and Lottery 2 in Table \ref{tab:personalIncomp}, which are hardly exotic. Our use of Small Improvements should also be uncontroversial: given that $(a,b)$ is incomparable with $(a,a)$ and $(b,b)$, then it seems (in the spirit of insensitivity to small improvements) that it should be possible to slightly improve or worsen someone's life in $(a,b)$ while maintaining that incomparability.\footnote{\textcolor{black}{Recall from n.\ \ref{fn:superStrongPareto} that we can't judge $(a^+,b)$ to be better than $(a,a)$ on the grounds that it is better for someone and worse for no one, on pain of generating cycles of betterness.}} 
We think this principle is plausible in full generality, but, as with Richness, rejecting it in particular cases would not be enough \textcolor{black}{to block our argument, which only requires there to be} 
\emph{some} case with the structure of the one above.

\textcolor{black}{The most debatable of our auxiliary assumptions is Incomparability Transmission. This was used to secure the conclusion that, if $a$ and $b$ are personally incomparable, then $(a,a)$ and $(b,b)$ are impersonally incomparable with $(a,b)$ and $(b,a)$. This claim seems quite plausible on its face.   
But as with Personal Good, it might be questioned by someone who places intrinsic value on \textit{ex post} equality and/or diversity.\footnote{\textcolor{black}{Thanks to two anonymous reviewers for this point.}} As we saw in \S \ref{s:personalGood}, someone who values equality or diversity might reject Personal Good because 
it takes no account of correlations between individual prospects and hence is insensitive to holistic features of outcomes. Likewise, they might reject Incomparability Transmission on the grounds that, even if lives $a$ and $b$ are incomparable from an individual perspective, one of them might yield an overall outcome that is more equal or more diverse, and therefore impersonally better. For instance, someone who values \emph{ex post} diversity might claim that $(a,b)$ and $(b,a)$ are both impersonally better than $(a,a)$ and $(b,b)$, contra Incomparability Transmission.}

\textcolor{black}{But such equality- and diversity-based objections to our argument (whether directed at Personal Good, Incomparability Transmission, or both) can be defused by slightly modifying our central example, as illustrated in Table \ref{tab:equalityDiversityCase}:}

\begin{table}[!htb]

\centering

\begin{minipage}{.3\linewidth}
\begin{subtable}{\linewidth}
\centering
\caption{Lottery 1}
\begin{tabular}{lcc}
\toprule
& Heads & Tails \\
\midrule
$P_1$	& $a$ 	& $c$ \\
$P_2$	& $b$   & $d$ \\	

\bottomrule
\end{tabular}
\end{subtable}

\end{minipage}
\begin{minipage}{.3\linewidth}
\begin{subtable}{\linewidth}
\centering
\caption{Lottery 2}
\begin{tabular}{lcc}
\toprule
& Heads & Tails \\
\midrule
$P_1$	& $a^+$ & $c$ \\
$P_2$	& $d$	& $b$ \\	
\bottomrule
\end{tabular}
\end{subtable}		
\end{minipage}

\caption{}
\label{tab:equalityDiversityCase}  
\end{table}

\noindent \textcolor{black}{In this new example, there are four pairwise incomparable lives, with $P_1$ facing a lottery between $a$ and $c$, and $P_2$ facing a lottery between $b$ and $d$. As before, Lottery 2 reverses the correlations between the two personal lotteries while sweetening life $a$ for $P_1$. But now there is no longer any apparent difference between the possible outcomes of the two lotteries in terms of either equality or diversity.}\footnote{\textcolor{black}{One might question whether \emph{any} two outcomes consisting of two incomparable lives must be equivalent in terms of equality and diversity. But again, we don't need this to be true in full generality. And it seems very plausible that, assuming  personal incompleteness and given systematic theories of equality and diversity under incompleteness, we can find \emph{some} quadruple of incomparable lives that would make the outcomes in Table \ref{tab:equalityDiversityCase} equivalently equal and diverse.}} 
\textcolor{black}{Using this example, then, our argument would go through even if we restricted both Personal Good and Incomparability Transmission to contexts where neither equality nor diversity is at stake. 
We haven't made this example the centerpiece of our discussion because it requires assuming the existence of four pairwise incomparable lives, which is not 
guaranteed by personal incompleteness \emph{per se}. But most natural incomplete theories of personal betterness will satisfy this assumption. 
For instance, it would be strange to think that \emph{artist} and \emph{banker} are incomparable but to deny that we can find two more lives (say, \emph{CIA operative} and \emph{dairy farmer}) to complete the required quartet.}

A notable special case of the preceding result is where one of the incomparable ``lives'' is nonexistence.\footnote{\textcolor{black}{Nonexistence incomparability is one form of incomplete personal betterness that might not allow us to find four pairwise incomparable lives: one might think that existence is incomparable with nonexistence, but that any two lives other than nonexistence can be compared. But in this} \textcolor{black}{case, worries about equality and diversity don't arise for our initial argument: If one of $a$ or $b$ is nonexistence, then none of the four outcomes in the original example involve any \emph{ex post} inequality or diversity.}} 
Many philosophers accept the following principle:

\begin{description}
\item[Variable-Population Incompleteness] There is some life $a$ such that, whenever two outcomes $o_1$ and $o_2$ are identical except that one possible individual does not exist in $o_1$ and exists with life $a$ in $o_2$, $o_1$ and $o_2$ are incomparable.
\end{description}

One motivation for this principle is the belief that existence and non-existence are incomparable in terms of \emph{personal} value, coupled with a principle like Incomparability Transmission. This might motivate the view that populations of different size are \emph{always} incomparable (\citeauthor{bader2022person} \citeyear{bader2022person,bader2022asymmetry}). Alternatively, whatever one thinks about the personal value of nonexistence, one might accept Variable-Population Incompleteness to avoid undesirable conclusions about impersonal value. For instance, critical-range utilitarianism, which holds that adding an individual with slightly positive welfare to a population results in incomparability, can avoid both the Repugnant Conclusion \citep[\S 131, 387--90]{parfit1984rap} and the Sadistic Conclusion \citep{arrhenius2000impossibility}.\footnote{\label{gustafssonnote}\cite{gustafsson2020population} proposes a version of critical-range utilitarianism that is motivated by a fourth category of value called ``undistinguishedness'' and shows that it escapes versions of the Repugnant and Sadistic Conclusions.} But the preceding argument shows us that both these views are in tension with Negative Dominance and Personal Good, taken together:

\begin{proposition} 
Given Richness, the principles of Negative Dominance, Personal Good, and Small Improvements rule out Variable-Population Incompleteness.
\end{proposition}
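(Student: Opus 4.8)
The plan is to exhibit the incompatibility directly, treating this as the $b=\emptyset$ instance of the argument behind Proposition~\ref{thm:personalIncomp}, where $\emptyset$ denotes nonexistence (the empty life). By Variable-Population Incompleteness, fix a life $a$ such that any two outcomes differing only in whether a single individual exists with life $a$ are incomparable. Richness then supplies two lotteries over two individuals $P_1, P_2$ (everyone else assured of nonexistence) built on a single fair coin, mirroring Table~\ref{tab:personalIncomp} with $\emptyset$ in place of $b$: Lottery 1 gives $P_1$ and $P_2$ each the prospect (Heads $a$, Tails $\emptyset$); Lottery 2 gives $P_1$ the prospect (Heads $a^+$, Tails $\emptyset$) and $P_2$ the prospect (Heads $\emptyset$, Tails $a$), where $a^+$ is a suitable improvement of $a$ obtained below. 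The key bookkeeping point is that each of $P_1$ and $P_2$ exists with probability $\tfrac12$ under both lotteries (and every other individual with probability $0$), so the existence-probability precondition of Personal Good is met; note this relies on $a^+$ being a genuine life, so that sweetening does not disturb the existence structure.

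The role played by Incomparability Transmission in Proposition~\ref{thm:personalIncomp} is here played by Variable-Population Incompleteness itself. The four ``pure'' outcomes in play are $(a,a)$ and $(\emptyset,\emptyset)$ (the outcomes of Lottery 1), together with $(a,\emptyset)$ and $(\emptyset,a)$. Each of $(a,\emptyset)$ and $(\emptyset,a)$ differs from each of $(a,a)$ and $(\emptyset,\emptyset)$ in exactly one individual's existence with life $a$, so Variable-Population Incompleteness renders all four of these cross-pairs incomparable. In particular $(a,\emptyset)$ is incomparable with both $(a,a)$ and $(\emptyset,\emptyset)$, which lets me apply Small Improvements to $o_1=(a,\emptyset)$ with targets $o_2=(a,a)$ and $o_3=(\emptyset,\emptyset)$. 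Concentrating on the improvement case, and taking the improvement to act on the one existing individual $P_1$, this yields an outcome $(a^+,\emptyset)$ that is personally better than $(a,\emptyset)$ for $P_1$, equally good for everyone else, and still incomparable with both $(a,a)$ and $(\emptyset,\emptyset)$. This fixes the $a^+$ used above.

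With this in hand the two principles collide. On one side, every outcome of Lottery 2 is incomparable with every outcome of Lottery 1: the pairs $(a^+,\emptyset)$ vs.\ $(a,a)$ and $(a^+,\emptyset)$ vs.\ $(\emptyset,\emptyset)$ are incomparable by the Small Improvements step, while $(\emptyset,a)$ vs.\ $(a,a)$ and $(\emptyset,a)$ vs.\ $(\emptyset,\emptyset)$ are incomparable by Variable-Population Incompleteness. Hence no possible outcome of Lottery 2 is better than any possible outcome of Lottery 1, and Negative Dominance yields that Lottery 2 is \emph{not} impersonally better than Lottery 1. On the other side, $P_1$'s personal lottery under Lottery 2 --- $a^+$ with probability $\tfrac12$ and $\emptyset$ with probability $\tfrac12$ --- strictly stochastically dominates her personal lottery under Lottery 1 --- $a$ with probability $\tfrac12$ and $\emptyset$ with probability $\tfrac12$ --- because $a^+$ is personally better than $a$; meanwhile $P_2$ receives the same personal lottery ($a$ with probability $\tfrac12$, $\emptyset$ with probability $\tfrac12$) under both. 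Given the matching existence probabilities, part (ii) of Personal Good yields that Lottery 2 \emph{is} impersonally better than Lottery 1 --- a contradiction. So Variable-Population Incompleteness cannot hold alongside the stated principles.

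The one step requiring care, and the main (modest) obstacle, is the application of Small Improvements, which guarantees only \emph{some} admissible improvement or worsening. The worsening branch is handled by the symmetric construction used in Proposition~\ref{thm:personalIncomp} --- interchanging the roles of the two lotteries, so that the worsened life $a^-$ appears in Lottery 1 and $a$ in Lottery 2 --- and one checks that this again preserves the existence probabilities and delivers the same clash between Negative Dominance and Personal Good. One should also confirm that the modification can be taken to act on $P_1$ rather than on a currently nonexistent individual; this is immediate under the natural assumption that no life is equally as good as nonexistence (so that ``equally good for everyone else'' forces the nonexistent individuals to remain nonexistent), and is otherwise dispatched by running the same reasoning on whichever individual is modified. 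Everything else --- the existence-probability matching and the stochastic-dominance verifications --- is routine.
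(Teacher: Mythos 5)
Your proposal follows exactly the paper's route: the paper proves this proposition by setting $b$ in Table \ref{tab:personalIncomp} equal to nonexistence, letting Variable-Population Incompleteness (rather than Incomparability Transmission) supply the four cross-incomparabilities $(a,a),(\emptyset,\emptyset)$ vs.\ $(a,\emptyset),(\emptyset,a)$, and then running the Proposition \ref{thm:personalIncomp} argument verbatim. Your main line --- the lottery construction, the existence-probability bookkeeping for Personal Good (which the paper leaves implicit), the strict stochastic dominance for $P_1$, the Negative Dominance clash, and the symmetric treatment of the worsening branch --- is correct and in places more explicit than the paper's own sketch.

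There is, however, one genuinely flawed step: your claim that the case in which Small Improvements modifies a currently \emph{nonexistent} individual is ``dispatched by running the same reasoning on whichever individual is modified.'' It is not. If the improvement gives nonexistent $P_2$ a life $x$ better than nonexistence, yielding $o_1^+=(a,x)$, then placing $(a,x)$ in the Heads cell of Lottery 2 raises $P_2$'s existence probability to $1$ while it remains $\tfrac12$ in Lottery 1, so the precondition of Personal Good fails and no contradiction can be derived; and no repair within matched existence probabilities works, because the only strict betterness fact available ($x$ better than $\emptyset$) cannot be converted into stochastic dominance for $P_2$ without knowing how $x$ compares with $a$. Moreover, the assumption you invoke --- that no life is \emph{equally} as good as nonexistence --- only guarantees that the untouched individuals stay nonexistent in $o_1^+$; what would actually close this case is that no life is personally \emph{better or worse} than nonexistence (so that only $P_1$'s life can be modified), which is one standard motivation for Variable-Population Incompleteness but is not entailed by it. To be fair, the paper glosses over the same point (``The rest of the proof then follows the proof of Proposition \ref{thm:personalIncomp}''), implicitly reading Small Improvements as acting on the existing individual; but since you explicitly raised the case, the dismissal you give is incorrect as stated and should either be replaced by the stronger assumption above or by an argument that the required incomparabilities ($(a,x)$ with $(a,a)$ and $(\emptyset,\emptyset)$) cannot hold on the views under attack.
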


To see this, let $b$ in Table \ref{tab:personalIncomp} represent nonexistence and $a$ be a life of the sort described by Variable-Population Incompleteness, such that adding someone to the population with life $a$ yields incomparability. Then Variable-Population Incompleteness implies that $(a,a)$ and $(b,b)$ are incomparable with $(a,b)$ and $(b,a)$.\footnote{While the argument as stated appeals to the ``empty population'' $(b,b)$, we can avoid this by  simply adding another person living life $c$ to all outcomes under consideration.} 
The rest of the proof then follows the proof of Proposition \ref{thm:personalIncomp}. (Note that this result does not use Incomparability Transmission.)
The upshot is that Negative Dominance and Personal Good together tell strongly against views that see changes in population size as a source of incomparability.

\section{Incomparable outcomes} 
\label{s:resultsForPersonalCompleteness}

In the last section, we focused on incompleteness in the personal ranking of lives, and those forms of incompleteness in the impersonal ranking of outcomes that arise from it. But incomparability between outcomes can have other sources as well. For example, those who resist interpersonal tradeoffs or aggregation might hold that, in some or all cases where $o_1$ is better than $o_2$ for some individuals and worse for others, $o_1$ and $o_2$ are incomparable. Or one might hold that there are imprecise ``exchange rates'' between different desirable features of an outcome, like total vs.\ average welfare or average welfare vs.\ equality, so that pairs of outcomes that trade off these dimensions are sometimes incomparable. (Critical-range utilitarianism, discussed above, can be seen as a view of this kind, in which ``small population size'' is a desideratum that trades off imprecisely with total welfare.)

This section will show that, as long as population-level incomparability has a plausible ``separability'' property analogous to Incomparability Transmission, 
these views are vulnerable to essentially the same difficulty that we explored in the last section. In other words, Negative Dominance and Personal Good support a general argument against incompleteness in the impersonal value of outcomes, whether or not it arises from incompleteness in personal value.

Our argument will roughly parallel the argument in the last section, with \emph{subpopulations} taking the place of single individuals. A subpopulation is a set of possible individuals that have been arbitrarily indexed (i.e., assigned positive integers from 1 up to the number of individuals in the subpopulation). We also introduce the concept of a \emph{distribution}, which stands in approximately the relation to subpopulations that lives stand in to individuals. A distribution is a function from the first $n$ positive integers to the set of lives (including nonexistence). The \emph{size} of a subpopulation is the number of possible individuals it contains, and the size of a distribution is the number of lives it assigns. For any subpopulation $S$ consisting of possible individuals \textcolor{black}{$P_{S1}, P_{S2}, \ldots, P_{Sn}$}, and any distribution $D$ of size $n$ or less, we say that $S$ \emph{realizes} $D$ when $P_{S1}$ experiences the life $D(1)$, $P_{S2}$ experiences the life $D(2)$, and so on, with any leftover individuals in $S$ being assigned nonexistence.

We then give the following definition:

\begin{description}
\item[Strong Incomparability] Distributions $A$ and $B$ are \emph{strongly incomparable} if \textcolor{black}{and only if}, whenever outcomes $o_1$ and $o_2$ are identical except that some subpopulation realizes distribution $A$ in outcome $o_1$ and distribution $B$ in outcome $o_2$, $o_1$ is incomparable with $o_2$.
\end{description}

To say that $A$ and $B$ are strongly incomparable, in other words, is analogous to saying that lives $a$ and $b$ are personally incomparable in a way that satisfies Incomparability Transmission: exchanging one for the other, while holding everything else fixed, always generates incomparability.\footnote{Note, however, that we have not formally introduced a betterness ordering on distributions. To say that $A$ and $B$ are strongly incomparable does not mean that they are impersonally incomparable in our formal sense, because the latter relation is only defined on outcomes.}

It seems plausible that, whenever two outcomes are incomparable, the two distributions they realize (relative to a given indexing of the set of individuals who exist in either outcome) are strongly incomparable. 
But this is a substantial assumption, amounting to a form of separability: it allows us, in comparing two outcomes, to focus only on those individuals whose lives differ between the outcomes and ignore everyone else. Since this assumption may be controversial, we will focus on the much weaker claim that, if impersonal betterness is incomplete, then at least \emph{some} distributions are strongly incomparable. For example, perhaps a sufficiently sharp tradeoff between total welfare and inequality \textcolor{black}{in a subpopulation of a given size} always generates incomparability, 
even if for less sharp tradeoffs, the impersonal betterness ranking can depend on features of the unaffected population. As we will see, our argument will not require the full force of even this modest existential claim.

We now come to our next result:

\begin{proposition}
Given Richness, the principles of Negative Dominance, Personal Good, and Small Improvements rule out strong incomparability.
\end{proposition}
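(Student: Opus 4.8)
The plan is to reprise the argument for Proposition~\ref{thm:personalIncomp}, with a strongly incomparable pair of distributions $A$ and $B$ playing the role that the incomparable lives $a$ and $b$ played there, and with two \emph{subpopulations} playing the role of the two individuals. The crucial observation is that Strong Incomparability already packages the work that Incomparability Transmission did in the earlier proof: because swapping $A$ for $B$ within a subpopulation, holding everything else fixed, always yields incomparability, the needed incomparabilities follow directly, and Incomparability Transmission need not be invoked separately (this is why it is absent from the hypotheses).

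First I would fix two disjoint subpopulations $S_1$ and $S_2$, each large enough to realize both $A$ and $B$, and consider the four outcomes $o_{AA}, o_{BB}, o_{AB}, o_{BA}$ in which $S_1$ realizes the first distribution, $S_2$ realizes the second, and every other possible individual is nonexistent (these exist by Richness). Applying Strong Incomparability to $S_2$ shows $o_{AB}$ is incomparable with $o_{AA}$ (they differ only in whether $S_2$ realizes $B$ or $A$), and applying it to $S_1$ shows $o_{AB}$ is incomparable with $o_{BB}$; the same reasoning shows $o_{BA}$ is incomparable with both $o_{AA}$ and $o_{BB}$. Since $o_{AB}$ is incomparable with both $o_{AA}$ and $o_{BB}$, Small Improvements supplies a modification that changes exactly one individual's life---say, improving individual $j_0 \in S_1$, so that $S_1$ now realizes a distribution $A^+$ agreeing with $A$ except that $A^+(j_0)$ is strictly personally better than $A(j_0)$---yielding an outcome $o_{A^+B}$ still incomparable with $o_{AA}$ and $o_{BB}$. (The worsening case, and the case where the altered individual lies in $S_2$, are handled symmetrically.) Using Richness, form Lottery~1 $= \tfrac12 o_{AA} + \tfrac12 o_{BB}$ and Lottery~2 $= \tfrac12 o_{A^+B} + \tfrac12 o_{BA}$.

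The contradiction is then extracted exactly as before. All four cross-pairs are incomparable---$o_{A^+B}$ with $o_{AA}$ and with $o_{BB}$ by Small Improvements, and $o_{BA}$ with $o_{AA}$ and with $o_{BB}$ by Strong Incomparability---so Negative Dominance yields that Lottery~2 is not better than Lottery~1. On the other hand, Personal Good must be applied at the level of \emph{individuals}, not subpopulations: each individual in $S_2$ faces the personal lottery $\{A(k)\ \text{w.p.}\ \tfrac12,\ B(k)\ \text{w.p.}\ \tfrac12\}$ under both lotteries (Lottery~2 merely swaps which coin side delivers which life), hence a stochastically equivalent prospect; each individual in $S_1$ other than $j_0$ likewise faces an unchanged prospect; and $j_0$ faces $\{A^+(j_0), B(j_0)\}$ in place of $\{A(j_0), B(j_0)\}$, a strictly stochastically dominant prospect. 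Personal Good~(ii) then delivers that Lottery~2 \emph{is} better than Lottery~1, contradicting Negative Dominance.

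The step that requires the most care---and the main obstacle---is verifying the hypotheses of Personal Good, in particular that every possible individual has the \emph{same probability of existence} under the two lotteries. This is immediate for individuals in $S_2$ and for individuals in $S_1$ other than $j_0$, but for $j_0$ it requires that the Small Improvements modification take $A(j_0)$ to a strictly better life of the \emph{same existence status}---i.e.\ that a genuine betterness improvement to a living individual leaves them living. I would argue this holds on the welfarist reading in play: a strict personal improvement is a change among (comparable) lives and does not toggle existence, so $j_0$'s existence probability is preserved and Personal Good applies. (Where nonexistence is itself one of the incomparable options, the relevant case is the variable-population one already treated above.) The remaining checks---that the altered outcome still differs from $o_{AB}$ in a single individual's life, and that the construction needs only \emph{some} strongly incomparable pair rather than all of them---parallel the corresponding remarks following Proposition~\ref{thm:personalIncomp}.
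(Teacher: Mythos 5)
Your proof is correct and follows essentially the same route as the paper's: the same pair of coin-flip lotteries built from two disjoint subpopulations realizing $(A,A)/(B,B)$ versus $(A^+,B)/(B,A)$, with Strong Incomparability supplying the four cross-incomparabilities (which is exactly why Incomparability Transmission is absent from the hypotheses), Small Improvements supplying $A^+$ (or the symmetric worsening case), and Negative Dominance versus Personal Good delivering the contradiction. Your additional check that the Small Improvements modification preserves existence status---so that Personal Good's equal-probability-of-existence hypothesis is satisfied---is a point the paper leaves implicit rather than a departure from its argument.
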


\begin{proof}
Suppose that $A$ and $B$ are strongly incomparable. 
Letting $n$ be the size of the larger of $A$ and $B$, choose two disjoint subpopulations $S = \{P_{S1}, P_{S2}, \ldots, P_{Sn}\}$ and $T = \{P_{T1}, P_{T2}, \ldots, P_{Tn}\}$. Now, consider the two lotteries in Table \ref{tab:strongIncomp}.

\begin{table}[!htb]

\centering

\begin{minipage}{.3\linewidth}
\begin{subtable}{\linewidth}
\centering
\caption{\textsc{Lottery 1}}
\begin{tabular}{lcc}
	\toprule
	& Heads & Tails \\
	\midrule
	$S$	& $A$ 	& $B$ \\
	$T$	& $\textcolor{black}{A}$   & $\textcolor{black}{B}$ \\	
	\bottomrule
\end{tabular}
\end{subtable}

\end{minipage}
\begin{minipage}{.3\linewidth}
\begin{subtable}{\linewidth}
\centering
\caption{\textsc{Lottery 2}}
\begin{tabular}{lcc}
	\toprule
	& Heads & Tails \\
	\midrule
	$S$	& $A^+$ & $B$ \\
	$T$	& $\textcolor{black}{B}$	& $\textcolor{black}{A}$ \\	
	\bottomrule
\end{tabular}
\end{subtable}		
\end{minipage}

\caption{}
\label{tab:strongIncomp}  
\end{table}

These are exact analogues of the two lotteries used to prove Proposition \ref{thm:personalIncomp}, with individuals $P_1$ and $P_2$ replaced by subpopulations $S$ and $T$, and lives $a$ and $b$ replaced by distributions $A$ and $B$. Each outcome assigns a distribution to each subpopulation: in the Tails outcome of Lottery 2, for instance, $P_{S1}$ receives $B(1)$, $P_{S2}$ receives $B(2)$, etc, and likewise $P_{T1}$ receives $A(1)$, $P_{T2}$ receives $A(2)$, etc, with any leftover individuals being assigned nonexistence.

Now, it follows from the definition of strong incomparability that $(A,A)$ and $(B,B)$ are both incomparable with $(A,B)$ and $(B,A)$ (where the first and second terms of each ordered pair represent the distributions realized by $S$ and $T$ respectively). By Small Improvements, there exists either an outcome $(A^+,B)$ or $(A, B^+)$ that slightly improves one life in $(A,B)$, or else an $(A^-,B)$ or $(A,B^-)$ that slightly worsens one life in $(A,B)$, that is still incomparable with both $(A,A)$ and $(B,B)$. As before, we focus on the first case. Richness guarantees that the domain of the impersonal betterness relation includes the two lotteries in Table \ref{tab:strongIncomp}. And in comparing these two lotteries, Negative Dominance implies that Lottery 2 is not better than Lottery 1, while Personal Good implies that it is.

\end{proof}

As before, 
all our argument really requires is that we can find distributions $A$, $A^+$, and $B$ such that $(A,A)$ and $(B,B)$ are both incomparable with $(A^+,B)$ and $(B,A)$. We think it will be hard for proponents of incomparability between outcomes to deny this claim (we ourselves cannot think of any plausible incomplete outcome axiology that does so), even if they reject our Small Improvements assumption or deny 
that any distributions are strongly incomparable. 
It seems hard, therefore, to maintain that the impersonal betterness ranking of outcomes is incomplete without denying either Negative Dominance or Personal Good.

\section{From outcomes to lotteries} 
\label{s:lotteries}

In the last section, we presented an argument for completeness in the impersonal betterness ranking of outcomes. But the impersonal betterness relation, we have assumed, is defined not just on outcomes but on lotteries. And a complete ordering of outcomes need not entail a complete ordering of lotteries. For instance, even if outcomes are completely ordered by total welfare, the ordering of lotteries might be incomplete because it supervaluates over a range of reasonable risk attitudes. At an extreme, for instance, lotteries might be ordered only by Stochastic Dominance \citep{tarsney2020exceeding}.

In this section, we take our argument a step further by showing that Personal Good goes a long way toward bridging the gap between completeness for outcomes and completeness for lotteries.
Thus, if we accept completeness for outcomes on grounds that include Personal Good, there is at least significant reason to accept completeness for lotteries as well. This argument has somewhat limited scope (limited to what we will call ``finite rational lotteries''), and it depends more heavily than the arguments of the last two sections on assumptions that go beyond Negative Dominance and Personal Good. So we do not think that this pair of principles support completeness for lotteries as strongly as they support completeness for lives and outcomes. But as we will see, the auxiliary assumptions we require are still quite plausible.

Those assumptions consist of Stochastic Dominance and two new principles:

\begin{description}
\item[Substitution] For any 
outcome $o$ and any sufficiently large subpopulation $S$ (where what counts as ``sufficiently large'' can depend on $o$\textcolor{black}{, but is always finite})
there is an outcome that is equally as good as $o$ in which only  individuals in $S$ exist.

\item[Transitivity] If $L_1$ is at least as good as $L_2$, and $L_2$ is at least as good as $L_3$, then $L_1$ is at least as good as $L_3$.  
\end{description}

Transitivity is a familiar principle that is very widely (though not universally) accepted, and we won't have anything new to say for or against it. Substitution says, intuitively, that any degree of impersonal value can be realized by any sufficiently large group of possible individuals. This principle might be denied, for instance, by those who think that populations in which different individuals exist are always incomparable. But against the background assumption of completeness for outcomes, which we have just argued for, it's hard to see how one might deny it. Given completeness, in particular, Substitution follows from

\begin{description}
\item[Weak Anonymity] For any outcome $o$ and permutation $\pi$ of the set of possible individuals, $o$ is neither better nor worse than $\pi(o)$ (where $\pi(o)$ is the outcome in which $\pi(o)(x)=o(\pi(x))$.
\end{description}
Weak Anonymity seems like a minimal expression of impartiality. Given completeness for outcomes, Substitution can be understood as a still weaker sort of anonymity principle, which permits significant (though not unlimited) partiality between possible individuals.\footnote{\label{fn:infiniteSubstitution}\textcolor{black}{
The background restriction to finite-population outcomes is important here because, as with other anonymity principles, Substitution becomes much stronger and easier to deny in infinite-population contexts. 
For instance, consider an outcome $o$ in which a countably infinite set $T$ of possible individuals exist, all with lives that are maximally or near-maximally good (assuming that to be possible). And consider a subset $S$ of $T$ such that $S$ and its relative complement in $T$ are both infinite. Although $S$ is as large, in cardinal terms, as a subpopulation can get (given the assumption of only countably many possible individuals), it is at least doubtful whether any outcome in which only members of $S$ exist can be as good as $o$.}}

Our argument will be restricted to \emph{finite rational lotteries}: \textcolor{black}{lotteries over finite-population outcomes that are finitely supported (i.e., have only finitely many possible outcomes),}  
with each outcome having a rational probability. The finitary restrictions are substantive\textcolor{black}{, and we make them explicit here (although they have been implicitly in force throughout) for two reasons. First, this is the only one of our results that formally depends on finitary restrictions---in particular, the proof below requires the assumption of finite support. Second, while the result could formally be extended to finite-support lotteries over infinite populations, there is particular reason here to be skeptical of our axioms in that context---see n.\ \ref{fn:infiniteSubstitution}.} By contrast, we regard the ``rationality'' restriction as merely technical---it seems obvious that if lotteries with rational probabilities are completely ordered, then lotteries with real probabilities are as well.

Here is our next result:

\begin{proposition} 
If outcomes are completely ordered by impersonal betterness then, given Richness, 
the principles of Personal Good, Stochastic Dominance, Substitution, and Transitivity imply that the set of finite rational lotteries is also completely ordered by impersonal betterness.
\end{proposition}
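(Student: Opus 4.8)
The plan is to reduce the comparison of two finite rational lotteries to a comparison of \emph{outcomes}, which completeness of the outcome ordering then settles, using Transitivity to stitch the reduction together. Two lemmas do the bookkeeping. First, from part (i) of Personal Good applied in both directions, I get a \emph{profile lemma}: if two lotteries assign every individual the very same personal lottery (the same distribution over lives, with nonexistence treated as a life, and hence in particular the same probability of existence), then each weakly stochastically dominates the other individual-by-individual, so each is at least as good as the other, i.e.\ they are equally good. Thus the impersonal value of a lottery depends only on its profile of personal lotteries, and I may re-correlate a lottery across individuals however I like so long as I preserve each individual's marginal. Second, from Stochastic Dominance together with completeness of the outcome ordering I get a \emph{collapse lemma}: (a) replacing each possible outcome of a lottery by an equally good outcome leaves the lottery equally good, since for no threshold outcome $o$ does the substitution change the probability of landing weakly above $o$; and (b) a lottery all of whose possible outcomes are equally good is equally good to the sure outcome yielding any one of them.

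The core construction turns a lottery into a sure outcome by spreading its risk over a population. Given a lottery $L$ whose outcome probabilities share denominator $N$, I first use Substitution with the collapse lemma to replace each possible outcome by an equally good outcome realized on a common, sufficiently large block of individuals, so that all possible outcomes live on one fixed block. I then pass to $N$ disjoint copies of that block and consider the exchangeable lottery that, with probability $1/N$ each, realizes the cyclic arrangements of a fixed assignment in which exactly $m_j$ copies receive the $j$-th outcome. By Weak Anonymity every possible outcome of this exchangeable lottery is a permutation of a single fixed diagonal outcome, so by the collapse lemma it is equally good to that sure diagonal outcome. By the profile lemma it also has the same value as the lottery of $N$ independent copies of $L$ run on the disjoint blocks, since the two share per-individual marginals. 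This attaches a canonical sure outcome to the $N$-fold version of $L$.

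Carrying this out for both $L_1$ and $L_2$ on a common population and common denominator yields sure diagonal outcomes $O_1$ and $O_2$; completeness of the outcome ordering compares $O_1$ and $O_2$, and Transitivity is meant to carry the verdict back to $L_1$ and $L_2$. The rationality restriction supplies the common denominator $N$ that makes the finite exchangeable construction exact, finite support keeps every population finite, and Weak Anonymity (hence Substitution) lets the diagonal outcomes be placed on a common population so the comparison is well posed.

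The hard part will be bridging between a lottery and its spread form while respecting that Personal Good is licensed \emph{only} when every individual has the same probability of existence in the two lotteries compared. The profile and collapse lemmas reduce a lottery to a sure outcome cleanly when it has \emph{full existence} (the same individuals exist in every possible outcome) and exchangeable marginals, but a lottery in which some individual exists with strictly intermediate probability cannot simply be equated with a sure outcome, since a sure outcome assigns every individual probability-$0$ or probability-$1$ existence. The crux is therefore to deploy Substitution to re-realize the possible outcomes of $L_1$ and $L_2$ on a common large subpopulation so that the two lotteries share an existence profile, letting part (ii) of Personal Good (or the reduction to a common pair of diagonal outcomes) be brought to bear. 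I expect the careful management of these existence probabilities---absorbing nonexistence into the space of lives and matching the existence profiles of the two lotteries so that the construction is order-faithful---to carry essentially all the weight, with the profile lemma, the collapse lemma, and the anonymity/cyclic-mixing step serving as comparatively routine scaffolding around it.
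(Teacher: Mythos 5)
Your two lemmas are sound, and they are in fact the same bridging tools the paper itself uses: the profile lemma is Personal Good (i) applied in both directions, and collapse lemma (a) is how the paper replaces each possible outcome by an equally good one on a fresh population. But your core construction connects the wrong objects. The cyclic-shift lottery shares per-individual marginals with the lottery of $N$ \emph{independent copies} of $L$ on the disjoint blocks, not with $L$ itself: under $L$, individuals in the other $N-1$ blocks have existence probability $0$, while under the cyclic-shift lottery they exist with positive probability, so the profile lemma cannot relate the two (Personal Good is not even licensed between them). Hence the sure diagonal outcome $O$ you construct is attached to the $N$-fold replication of $L$, and nothing in the granted premises---Personal Good, Stochastic Dominance, Substitution, Transitivity, completeness of outcomes---relates a lottery to its replication. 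There is no replication-invariance principle available, so Transitivity cannot ``carry the verdict back to $L_1$ and $L_2$.'' Your closing paragraph correctly identifies this bridging step as the crux, but leaves it unsolved, and the machinery you actually build (replication plus cyclic mixing) points away from a solution rather than toward one. A secondary problem: collapsing the cyclic-shift lottery to a sure outcome requires Weak Anonymity (permuted outcomes are equally good), which is \emph{not} among the proposition's premises; only the weaker Substitution is granted, and Substitution asserts only that \emph{some} equally good outcome exists on a given subpopulation, not that the permuted outcome is it.

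The paper closes exactly the gap you deferred, without replication and without anonymity. View both lotteries as gambles over $n$ equiprobable states ($n$ the common denominator). Using Substitution and your collapse lemma (a), replace the outcome in each state by an equally good outcome on a fresh subpopulation, so the possible outcomes of each lottery become pairwise population-disjoint; call the result $L_{1'}$. Then ``stack'': let $L_{1''}$ realize, in state $s_1$, the single outcome combining all $n$ disjoint outcomes of $L_{1'}$, and the empty outcome in every other state. This is the move your construction lacks: stacking changes no individual's personal lottery---each affected individual exists with probability $1/n$ and with the same life before and after, merely in a different state---so Personal Good (i) applied in both directions gives $L_{1'}$ equally good as $L_{1''}$, with no need to compare $L$ against a lottery on a replicated population. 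Now $L_{1''}$ and $L_{2''}$ differ only in state $s_1$, where they yield single outcomes $o_1$ and $o_2$; completeness compares $o_1$ with $o_2$, Stochastic Dominance lifts that comparison to $L_{1''}$ versus $L_{2''}$, and Transitivity carries it back along the chain from $L_1$ and $L_2$.
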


\begin{proof}
Our strategy is to show that, given Personal Good, Stochastic Dominance, Substitution, and Transitivity, the ranking of finite rational lotteries ``supervenes on'' the ranking of outcomes: that is, for any finite rational lotteries $L_1$ and $L_2$, we can find a pair of outcomes $o_1$ and $o_2$ such that $L_1$ is at least as good as $L_2$ if and only if $o_1$ is at least as good as $o_2$.\footnote{We take inspiration here from \cite{thomas2023asymmetry}, who proves a similar supervenience result from a different but overlapping set of premises. (Roughly, Thomas does not assume Transitivity or Stochastic Dominance, but makes an anonymity assumption stronger than Substitution.)} This implies that, if the ranking of outcomes is complete, then the ranking of finite rational lotteries is as well.

Consider, then, an arbitrary pair of finite rational lotteries $L_1$ and $L_2$. To reduce the comparison between $L_1$ and $L_2$ to a comparison between outcomes, we will subject them to a series of transformations. First, we find the least common denominator of all the outcome probabilities in $L_1$ and $L_2$, which we will call $n$. We can then imagine $L_1$ and $L_2$ as gambles in a situation with $n$ equiprobable states. (As in our previous arguments, imagining lotteries as gambles is simply a useful presentational device---states don't play any  formal role in our argument.) Take the set of individuals who exist in any outcome of $L_1$ or $L_2$, index them arbitrarily, and call the resulting subpopulation $S_1$. $L_1$ and $L_2$ can then be represented as assigning distributions to this subpopulation in each state, as in Table \ref{tab:lotteriesStep1}.

\begin{table}[!htb]

\centering

\begin{minipage}{.49\linewidth}
\begin{subtable}{\linewidth}
\centering
\caption{\textsc{Lottery 1}}
\begin{tabular}{lcccc}
	\toprule
	& $s_1$ $\left( \frac{1}{n}\right)$ & $s_2$ $\left( \frac{1}{n}\right)$ & $\ldots$ & $s_n$ $\left( \frac{1}{n}\right)$  \\
	\midrule
	$S_1$	& $D_{1.1}$ 	& $D_{1.2}$ & $\ldots$ & $D_{1.n}$ \\
	\bottomrule
\end{tabular}
\end{subtable}

\end{minipage}
\begin{minipage}{.49\linewidth}
\begin{subtable}{\linewidth}
\centering
\caption{\textsc{Lottery 2}}
\begin{tabular}{lcccc}
	\toprule
	& $s_1$ $\left( \frac{1}{n}\right)$ & $s_2$ $\left( \frac{1}{n}\right)$ & $\ldots$ & $s_n$ $\left( \frac{1}{n}\right)$ \\
	\midrule
	$S_1$	& $D_{2.1}$ 	& $D_{2.2}$ & $\ldots$ & $D_{2.n}$ \\
	\bottomrule
\end{tabular}
\end{subtable}		
\end{minipage}

\caption{}
\label{tab:lotteriesStep1}  
\end{table}

The next step is to replace each outcome of each lottery with an equally good outcome, using ``fresh'' sets of individuals, so that no possible individual exists in more than one outcome of a given lottery. (It doesn't matter whether the set of individuals who might exist in one lottery overlaps with the set of individuals who might exist in the other.) Call the resulting lotteries $L_{1'}$ and $L_{2'}$. These transformed lotteries are illustrated in Table \ref{tab:lotteriesStep2}. Substitution (together with Richness) guarantees that we can find lotteries with the required properties, since for each outcome of each lottery, we can find an equally good outcome involving a fresh set of individuals. Stochastic Dominance guarantees that $L_{1'}$ and $L_1$ are equally good, and that $L_{2'}$ and $L_2$ are equally good.

\begin{table}[!htb]

\centering
\footnotesize

\begin{minipage}{.495\linewidth}
\begin{subtable}{\linewidth}
\centering
\caption{\textsc{Lottery $1'$}}
\begin{tabular}{lcccc}
	\toprule
	& $s_1$ $\left( \frac{1}{n}\right)$ & $s_2$ $\left( \frac{1}{n}\right)$ & $\ldots$ & $s_n$ $\left( \frac{1}{n}\right)$  \\
	\midrule
	$S_{1'.1}$	& $D_{1.1}$ 	& --- & $\ldots$ & --- \\
	$S_{1'.2}$	& --- 	& $D_{1'.2}$ & $\ldots$ & --- \\
	$\vdots$ & $\vdots$ & $\vdots$ & $\ddots$ & $\vdots$ \\
	$S_{1'.n}$	& --- 	& --- & $\ldots$ & $D_{1'.n}$ \\
	\bottomrule
\end{tabular}
\end{subtable}

\end{minipage}
\begin{minipage}{.495\linewidth}
\begin{subtable}{\linewidth}
\centering
\caption{\textsc{Lottery $2'$}}
\begin{tabular}{lcccc}
	\toprule
	& $s_1$ $\left( \frac{1}{n}\right)$ & $s_2$ $\left( \frac{1}{n}\right)$ & $\ldots$ & $s_n$ $\left( \frac{1}{n}\right)$  \\
	\midrule
	$S_{2'.1}$	& $D_{2.1}$ & --- & $\ldots$ & --- \\
	$S_{2'.2}$	& --- 	& $D_{2'.2}$ & $\ldots$ & --- \\
	$\vdots$ & $\vdots$ & $\vdots$ & $\ddots$ & $\vdots$ \\
	$S_{2'.n}$	& --- 	& --- & $\ldots$ & $D_{2'.n}$ \\
	\bottomrule
\end{tabular}
\end{subtable}		
\end{minipage}

\caption{}
\label{tab:lotteriesStep2}  
\end{table}

The final step is to ``stack'' all the individual outcomes other than nonexistence, in each lottery, into state $s_1$. Since all the possible outcomes of $L_{1'}$ involve disjoint populations, and likewise for $L_{2'}$, this stacking is possible. The result is a pair of lotteries $L_{1''}$ and $L_{2''}$ which guarantee an ``empty'' outcome in every state except $s_1$, while yielding a potentially massive population in that single state. These lotteries are illustrated in Table \ref{tab:lotteriesStep3}. Since all the states are equiprobable, this second transformation does not affect any individual's personal lottery, so Personal Good implies that $L_{1''}$ and $L_{1'}$ are equally good, and likewise that $L_{2''}$ and $L_{2'}$ are equally good.

\begin{table}[!htb]

\centering
\footnotesize

\begin{minipage}{.495\linewidth}
\begin{subtable}{\linewidth}
\centering
\caption{\textsc{Lottery $1''$}}
\begin{tabular}{lcccc}
	\toprule
	& $s_1$ $\left( \frac{1}{n}\right)$ & $s_2$ $\left( \frac{1}{n}\right)$ & $\ldots$ & $s_n$ $\left( \frac{1}{n}\right)$  \\
	\midrule
	$S_{1'.1}$	& $D_{1.1}$ 	& --- & $\ldots$ & --- \\
	$S_{1'.2}$	& $D_{1'.2}$ 	& --- & $\ldots$ & --- \\
	$\vdots$ & $\vdots$ & $\vdots$ & $\ddots$ & $\vdots$ \\
	$S_{1'.n}$	& $D_{1'.n}$ 	& --- & $\ldots$ & --- \\
	\bottomrule
\end{tabular}
\end{subtable}

\end{minipage}
\begin{minipage}{.495\linewidth}
\begin{subtable}{\linewidth}
\centering
\caption{\textsc{Lottery $2''$}}
\begin{tabular}{lcccc}
	\toprule
	& $s_1$ $\left( \frac{1}{n}\right)$ & $s_2$ $\left( \frac{1}{n}\right)$ & $\ldots$ & $s_n$ $\left( \frac{1}{n}\right)$  \\
	\midrule
	$S_{2'.1}$	& $D_{2.1}$ & --- & $\ldots$ & --- \\
	$S_{2'.2}$	& $D_{2'.2}$ 	& --- & $\ldots$ & --- \\
	$\vdots$ & $\vdots$ & $\vdots$ & $\ddots$ & $\vdots$ \\
	$S_{2'.n}$	& $D_{2'.n}$ 	& --- & $\ldots$ & --- \\
	\bottomrule
\end{tabular}
\end{subtable}		
\end{minipage}

\caption{}
\label{tab:lotteriesStep3}  
\end{table}

Let $o_1$ designate the outcome of $L_{1''}$ in state $s_1$, and $o_2$ designate the outcome of $L_{2''}$ in state $s_1$. Since $L_{1''}$ and $L_{2''}$ yield the same outcome in every state except $s_1$, Stochastic Dominance implies that, if $o_1$ is at least as good as $o_2$, then $L_{1''}$ is at least as good as $L_{2''}$; and likewise, if $o_2$ is at least as good as $o_1$, then $L_{2''}$ is at least as good as $L_{1''}$. Since we have assumed that outcomes are completely ordered, we know that at least one of these conditions obtains. Therefore, we know that either $L_{1''}$ is at least as good as $L_{2''}$ or vice versa (or both). Finally, Transitivity implies that $L_1$ and $L_{1''}$ are equally good, $L_{2}$ and $L_{2''}$ are equally good, and therefore that $L_{1}$ is at least as good as $L_{2}$ if and only if $L_{1''}$ is at least as good as $L_{2''}$. Thus either $L_1$ is at least as good as $L_2$, or $L_2$ is at least as good as $L_1$, or both. Since $L_1$ and $L_2$ were chosen arbitrarily, this shows that finite rational lotteries are completely ordered.

\end{proof}

Should a reader convinced by the arguments for outcome completeness in the last two sections be convinced by this argument as well? We think so. The additional premises are Stochastic Dominance, Substitution, and Transitivity. Stochastic Dominance is, as we have said, a very plausible and widely accepted principle. And, although it neither implies nor is implied by Negative Dominance, both principles reflect the basic idea that we should evaluate lotteries based on the values and probabilities of their possible outcomes, so the principles have some overlapping appeal. 
Transitivity is likewise widely accepted. And although Substitution could certainly be questioned, it is hard to reject if we have already accepted completeness for outcomes, since it is then entailed by Weak Anonymity, which is widely accepted for finite populations.

\section{Objections}
\label{s:rejectingNegativeDominance}

\textcolor{black}{The last three sections have shown that, against weak background assumptions, Negative Dominance and Personal Good are incompatible with 
incompleteness in the personal value of lives and the impersonal value of finite outcomes and lotteries. 
We see this as a compelling argument against those forms of incompleteness. But of course defenders of incompleteness might choose to reject Negative Dominance or Personal Good instead.}
\footnote{\textcolor{black}{Indeed, we should note that, while the authors of this paper all see Negative Dominance and Personal Good as  compelling principles, we ourselves are not uniformly certain of their correctness. In particular, one of us (CT) has some idiosyncratic reservations about Personal Good.}}

\textcolor{black}{In this section we will consider three objections to our argument that might motivate rejecting one of these principles rather than incompleteness, and explain why we find them unconvincing.}

\paragraph{Objection} An intuitively strange feature of the preceding arguments is that they use principles about the ranking of lotteries (Negative Dominance and Personal Good) to derive strong constraints on the rankings of lives and outcomes. This might seem to get things backwards: The betterness relations on lives and outcomes are more fundamental than the betterness relation on lotteries 
and so, one might claim, principles about the latter relation should be constrained by rather than constrain the former relations. 
Moreover, \emph{personal} betterness is more fundamental than \emph{impersonal} betterness, so it's especially strange for principles about the impersonal ranking of lotteries to constrain the personal ranking of lives. The comparative personal value of, say, the lives of an artist and a banker is a prior question to principles like Negative Dominance and Personal Good. If those lives are incomparable, that is a fact that ethical and decision-theoretic principles must bend to accommodate, not something that should be denied because it forces ethicists and decision theorists to make difficult choices.

\textcolor{black}{\paragraph{Response} This argument rests on an implicit premise we reject: that the relevant kind of fundamentality---be it metaphysical or some other form of ethical fundamentality---restricts what sorts of propositions 
can be used as \emph{evidence} for which others. Our argument does not rely on the claim that, metaphysically speaking, facts about the betterness of lotteries \emph{ground} or \emph{explain} the betterness of lives and outcomes. It requires only that structural principles governing the 
ranking of lotteries can give us \emph{evidence} about structural principles governing the rankings of lives and outcomes. Elementary particles are metaphysically more fundamental than galaxies, but we can still revise our beliefs about elementary particles on the basis of evidence provided by the behavior of galaxies.}

\paragraph{Objection} The arguments of \S \S \ref{s:resultsForPersonalIncompleteness}--\ref{s:resultsForPersonalCompleteness} relied on the combination of Negative Dominance and Personal Good. Although these two principles can be jointly accepted 
in the finite context that has been assumed thus far, 
they turn out to be inconsistent in certain infinitary contexts---specifically, 
when comparing lotteries with infinitely many possible outcomes that involve infinite or unboundedly large populations. 

\begin{table}
\centering

\begin{subtable}{\linewidth}
\centering
\caption{\textsc{Lottery 1: Better for All, but Surely Bad}}
\vskip 1mm
\begin{tabular}{ccccccc>{\columncolor{lightgray}}c}
\toprule
& $s_1$ $\left( \frac{1}{3}\right)$ & $s_2$ $\left( \frac{2}{9}\right)$ & $s_3$ $\left( \frac{4}{27}\right)$ & $s_4$ $\left( \frac{8}{81}\right)$ & $s_5$ $\left( \frac{16}{243}\right)$ & $\cdots$ &  $\operatorname{Pr}(+|\exists)$ \\
\midrule
$P_{1}$ & $-1$  & $1$ & $1$ & $1$  & $1$ & $\cdots$ & \sfrac{2}{3}\\
$P_{2-3}$  &  --- & $-1$ & $1$ & $1$  & $1$ & $\cdots$ & \sfrac{2}{3} \\
$P_{4-7}$  &  --- & --- & $-1$ & $1$  & $1$ &  $\cdots$ & \sfrac{2}{3} \\
$P_{8-15}$  &  --- & --- & --- & $-1$  & $1$ &  $\cdots$ & \sfrac{2}{3} \\
$P_{16-31}$  &  --- & --- & --- &  --- & $-1$ & $\cdots$ & \sfrac{2}{3} \\
$\vdots$ 				 & $\vdots$ & $\vdots$ & $\vdots$ & $\vdots$ & $\vdots$ & $\ddots$& $\vdots$  \\
\rowcolor{lightgray}
Total						& $-1$ & $-1$ & $-1$ & $-1$ & $-1$ & $\cdots$ & \\

\bottomrule
\end{tabular}
\end{subtable}

\vskip 4mm

\begin{subtable}{\linewidth}
\centering
\caption{\textsc{Lottery 2: Worse for All, but Surely Good}}
\vskip 1mm
\begin{tabular}{ccccccc>{\columncolor{lightgray}}c}
\toprule
& $s_1$ $\left( \frac{1}{3}\right)$ & $s_2$ $\left( \frac{2}{9}\right)$ & $s_3$ $\left( \frac{4}{27}\right)$ & $s_4$ $\left( \frac{8}{81}\right)$ & $s_5$ $\left( \frac{16}{243}\right)$ & $\cdots$ &  $\operatorname{Pr}(+|\exists)$ \\
\midrule
$P_{1}$ & $1$  & $-1$ & $-1$ & $-1$  & $-1$ & $\cdots$ & \sfrac{1}{3} \\
$P_{2-3}$  &  --- & $1$ & $-1$ & $-1$  & $-1$ & $\cdots$ & \sfrac{1}{3} \\
$P_{4-7}$  &  --- & --- & $1$ & $-1$  & $-1$ & $\cdots$ & \sfrac{1}{3} \\
$P_{8-15}$  &  --- & --- & --- & $1$  & $-1$ & $\cdots$ & \sfrac{1}{3} \\
$P_{16-31}$  &  --- & --- & --- &  --- & $1$ & $\cdots$ & \sfrac{1}{3} \\
$\vdots$ 				 & $\vdots$ & $\vdots$ & $\vdots$ & $\vdots$ & $\vdots$ & $\ddots$& $\vdots$  \\
\rowcolor{lightgray}
Total						& $1$ & $1$ & $1$ & $1$ & $1$ & $\cdots$ & \\

\bottomrule
\end{tabular}
\end{subtable}

\caption{\hspace{1mm} An illustration of the conflict between Personal Good and Negative Dominance (and other dominance principles) in contexts involving infinitely many possible outcomes and unbounded/infinite populations. Each individual has the same probability of existing in both lotteries (with ``---'' representing nonexistence), and two possible welfare levels conditional on existence. The rightmost column gives each individual's total probability of positive welfare, conditional on existence. The bottom row gives the total welfare realized in each state.}
\label{tab:personalGoodVsDominance}
\end{table}

Table \ref{tab:personalGoodVsDominance} illustrates this inconsistency. The table describes a pair of lotteries in which, for every possible individual, the personal lottery given by Lottery 1 stochastically dominates the personal lottery given by Lottery 2; and yet Lottery 2 is certain to result in a positive total of personal value, while Lottery 1 is certain to result in a negative total of personal value.\footnote{This example builds on one in \cite{kowalczykFCsaving}. Related examples of conflicts between dominance principles and \emph{ex ante} Pareto-like principles in infinitary contexts appear in \cite{goodsell2021st}, \cite{wilkinson2023infinite}, \cite{nebelms}, and \cite{hongFCparadoxes}.}  
Personal Good implies that Lottery 1 is better than Lottery 2. But it is plausible that every outcome of Lottery 2 is better than every outcome of Lottery 1, in which case Negative Dominance implies that  Lottery 1 cannot be better. This latter claim 
would follow straightforwardly given a total utilitarian theory of impersonal value, but it follows from many other views as well and is in any case intuitively plausible.\footnote{ 
Totalism provides a simple illustration of the argument in the main text, but the axiological assumptions required to generate this sort of conflict between Negative Dominance and Personal Good are very weak. It would suffice, for instance, to assume a principle we might call ``Inverse Lives'': There exists a pair of lives $l^+$ and $l^-$ such that, among outcomes in which everyone who exists has one of those two lives, any outcome where most people have $l^+$ is better than any outcome where most people have $l^-$. This assumption is endorsed, for instance, by standard versions of total utilitarianism, critical-level utilitarianism, prioritarianism, average utilitarianism, variable-value utilitarianism, and many forms of egalitarianism.

We have presented the example as involving only finite (though unboundedly large) populations. 
We could switch to an infinite-population example by replacing every instance of nonexistence in Table \ref{tab:personalGoodVsDominance} with a neutral life, with welfare $0$. This would allow us to weaken Personal Good by restricting it to fixed-population contexts (where the same individuals exist in every outcome of both lotteries). Then the principle could simply assert that if one lottery gives every individual a stochastically dominant personal lottery, then it's impersonally better.} While it is not obvious \emph{which} principle should be given up in this case, it shows that Negative Dominance and Personal Good cannot \emph{both} be unrestrictedly true. \textcolor{black}{And if a \textit{prima facie} plausible principle must be abandoned in one context, that \textcolor{black}{significantly} undercuts its appeal in other contexts as well.}

\textcolor{black}{\paragraph{Response} 
We reject the premise that needing to give up a principle in infinite contexts must significantly undercut its appeal in finite contexts.\footnote{\textcolor{black}{For discussion of this issue, see for instance \cite{russell2023fanaticism} and \cite{kowalczykFCsaving}.}} There are many principles that hold in finite but not infinite domains---for instance, famously, the principle that no set can be put into one-to-one correspondence with any of its proper subsets. As a closer analogy, although Pareto and 
the standard Anonymity principle 
(which says that permuting a welfare distribution in an outcome results in an equally good outcome) are incompatible for infinite populations \citep{vanliedekerke1995should}, most ethicists and social welfare theorists accept both of these principles for finite populations, and use them happily as premises in that context. So although either Negative Dominance or Personal Good must be given up in certain infinite contexts, this need not seriously weaken the case for completeness in finite ones. In particular, Negative Dominance and Personal Good are perfectly consistent when restricted to lotteries with finitely many possible outcomes, even when some of those outcomes involve infinite populations.}\footnote{\textcolor{black}{Here's a theory of value that satisfies both principles for infinite-population, finite-support lotteries: Let the set of possible individuals be indexed by the natural numbers, and assume a free ultrafilter $U$ on the naturals. Suppose further that the personal betterness ordering of lives is complete, and that the value of lives (including nonexistence) can be represented by real numbers. Let $\operatorname{EV}(P_i | L_j)$ be the expected personal value for individual $P_i$ under lottery $L_j$. Then the following betterness relation on lotteries will satisfy both Negative Dominance and Personal Good: $L_1$ is at least as impersonally good as $L_2$ if and only if the set of $n$ such that $\sum_{i = 0}^n \operatorname{EV}(P_i | L_1) \geq \sum_{i = 0}^n \operatorname{EV}(P_i | L_2)$ belongs to the ultrafilter $U$. This view satisfies Personal Good unrestrictedly \textcolor{black}{and satisfies Negative Dominance for finite-support lotteries}. 
}

\textcolor{black}{An implication of this consistency is that our result in \S \ref{s:resultsForPersonalCompleteness} ruling out strong incomparability between distributions \emph{could} be extended to infinite distributions---ruling out incomparability between, for instance, the distributions $(1,0,1,0,\ldots)$ and $(0,1,0,1,\ldots)$. 
But we have some reservations about the application of Personal Good to infinite-population, finite-support lotteries, as we explain 
in the next note.}}
\textcolor{black}{The proofs of our results in \S \S \ref{s:resultsForPersonalIncompleteness}--\ref{s:lotteries} used only lotteries with finite support, and so don't require Negative Dominance or Personal Good to hold outside that context.}\footnote{\textcolor{black}{\label{infinitenote} Then why restrict ourselves to finite populations, as well as finitely supported lotteries? With respect to our result in \S \ref{s:lotteries} (concerning the ranking of lotteries), the main reason is that the principle of Substitution is dubious in infinite-population contexts (see n.\ \ref{fn:infiniteSubstitution}). With respect to the result in \S \ref{s:resultsForPersonalCompleteness} (concerning the ranking of outcomes), the story is slightly more subtle.}

\textcolor{black}{First, \textcolor{black}{we currently incline to the view that} the principle to give up in Table \ref{tab:personalGoodVsDominance} is Personal Good. Notably, in this example, Personal Good conflicts with just about any imaginable dominance principle, including Stochastic Dominance, Statewise Dominance, and even Superdominance---the principle that, if each possible outcome of $L_1$ is better than every possible outcome of $L_2$, then $L_1$ is better than $L_2$. Restricting Negative Dominance therefore wouldn't do us any good here unless we were also willing to restrict those other dominance principles. 
And those principles are, \textcolor{black}{intuitively}, 
absolutely essential to the idea that the final value of outcomes should be our guide to assessing lotteries. On the other hand, we can tell a plausible story about why Personal Good should fail here: A stochastic improvement to one individual's personal lottery 
is a \textit{pro tanto} improvement to the lottery as a whole. But in certain cases, infinitely many \textit{pro tanto} improvements can add up to an overall worsening---this phenomenon crops up in various well-known infinitary decision problems, like Rouble Trouble and Satan's Apple \citep{arntzenius2004bayesianism}. Table \ref{tab:personalGoodVsDominance} is, we think, another case of this kind.}

\textcolor{black}{If that's the right story to tell, then it also casts some doubt on Personal Good in the context of 
finite-support lotteries over infinite populations---more specifically, for pairs of lotteries such that transforming one into the other would require permutations or other changes to infinitely many personal lotteries. Even if each of these changes in isolation is neutral or positive, their overall effect \emph{might} be non-positive. 
Thus, although Negative Dominance and Personal Good are consistent for finite-support lotteries on infinite populations, we think our arguments for completeness are less compelling in that context than in strictly finite contexts.}}

\vspace*{-\topskip}\textcolor{black}{\paragraph{Objection} Negative Dominance should be rejected, because its appeal is fully captured by 
the following, weaker principle:}

\begin{description}
\item[Supervaluational Negative Dominance] Lottery $L_1$ is better than lottery $L_2$ only if, on every admissible completion of the personal and impersonal betterness relations, some possible outcome $o_1$ of $L_1$ is \textcolor{black}{better than} some possible outcome $o_2$ of $L_2$.
\end{description}

An ``admissible completion'' is one that satisfies some set of fundamental constraints on the ranking of outcomes. 
Suppose these constraints include Transitivity and 
Pareto. Then, in the case in Table \ref{tab:personalIncomp}, every admissible completion of the betterness rankings of lives and outcomes must rank some outcome of Lottery 2 (either $(a^+,b)$ or $(b,a)$) above some outcome of Lottery 1 (either $(a,a)$ or $(b,b)$).\footnote{Suppose that, on a given completion, $a$ is at least as personally good as $b$. Then, by Transitivity, $a^+$ is better than $b$, so by Pareto, $(a^+,b)$ is better than $(b,b)$. The other possibility is that $b$ is better than $a$, in which case $(b,a)$ is better than $(a,a)$ by Pareto.}
Thus, unlike Negative Dominance, Supervaluational Negative Dominance does not rule out that Lottery 2 is better than Lottery 1. More generally, we conjecture that Supervaluational Negative Dominance is consistent, in finite contexts, with the various sets of assumptions that were shown to conflict with Negative Dominance in \S \S \ref{s:resultsForPersonalIncompleteness}--\ref{s:resultsForPersonalCompleteness}, in which case a retreat to Supervaluational Negative Dominance would block our arguments for completeness. 
\textcolor{black}{And one might think that Supervaluational Negative Dominance captures everything that is appealing about Negative Dominance: If every (admissible) way of filling in the gaps in the betterness relations makes some outcome of $L_1$ better than some outcome of $L_2$, then ``one way or another'' there will be some outcome-based rationale for the superiority of $L_1$, even if the actual, incomplete betterness relation on outcomes does not single out particular outcomes to furnish that rationale.}

\textcolor{black}{\paragraph{Response} There are two reasons why we regard Supervaluational Negative Dominance as an inadequate substitute for Negative Dominance. First, it does not capture the motivation for Negative Dominance that we gave in \S \ref{s:negativeDominance}. The idea central to our defense of Negative Dominance was that what ultimately matters in evaluating lotteries is how things actually turn out, and how things actually turn out will be a single outcome, not a disjunction or mixture of outcomes. Within our preference-based analysis of the value of lotteries, retreating to Supervaluational Negative Dominance would mean that you can get a requiring reason to prefer $L_1$ over $L_2$ from the fact that ``If you were to fill in the gaps in the betterness relation, then \emph{some} outcome of $L_1$ (either this outcome, or that outcome, or…) would be better than \emph{some} outcome of $L_2$.’’ It is exactly this appeal to disjunctions of outcomes that, we think, a focus on actual value precludes.}

\textcolor{black}{Second, Supervaluational Negative Dominance gives a normative significance to \emph{completions} that seems unwarranted absent a commitment to \emph{completeness}. To illustrate what we mean, let's consider a potential rationale for Supervaluational Negative Dominance from the recent literature.   \citeauthor{rabinowicz2008value} (\citeyear{rabinowicz2008value,rabinowicz2021incommensurability}) gives a developed theory of value relations, on which value relations in general (on outcomes as well as lotteries) are understood in terms of ideal preference relations. In his \citeyear{rabinowicz2021incommensurability}, he endorses the following normative constraint on incomplete preferences: 
\begin{description}
\item[Well-Roundedness] If there is a permissible completion of $P$ [an agent's preference relation] with respect to $G$ [a set of items in the domain of that preference relation], then $P$ is the intersection of some non-empty set of permissible completions of $P$ with respect to $G$. \citep[219]{rabinowicz2021incommensurability}
\end{description}
}

\textcolor{black}{This principle appears to support Supervaluational Negative Dominance but not Negative Dominance. To see \textcolor{black}{that it supports Supervaluational Negative Dominance}, suppose there is an admissible completion of the personal and impersonal betterness relations on which no outcome of $L_1$ is better than any outcome of $L_2$. Assuming that an admissible completion of the betterness relation corresponds to a permissible completion of the agent's preferences, and that all permissible completions of the agent's preferences will satisfy at least very minimal (``positive'') dominance principles, there will be a permissible completion of the agent's preferences on which $L_2$ is strictly preferred to $L_1$, so a preference for $L_1$ over $L_2$ will not belong to the intersection of the set of permissible completions.} 

\textcolor{black}{On the other hand, to see that Well-Roundedness does not support Negative Dominance, consider our central example from \S\S \ref{s:intro}/\ref{s:resultsForPersonalIncompleteness}. If the permissible completions of an agent's preference relation must all satisfy Personal Good (or even a weakening of Personal Good that applies only to personal lotteries over totally ordered lives), 
then those completions will all include a preference for the sweetened Lottery 2 over the unsweetened Lottery 1. \textcolor{black}{Thus that preference will belong to the intersection of any set of permissible completions.} So Rabinowicz's principle \textcolor{black}{at least leaves open} 
that one is required to prefer Lottery 2. And this would contradict Negative Dominance given Rabinowicz's (and our) preference-based understanding of the betterness relation on lotteries, \textcolor{black}{together with the auxiliary assumptions of Incomparability Transmission and Small Improvements}.}

\textcolor{black}{But the principle of Well-Roundedness, we think, gives an unwarranted normative significance to \emph{completions}. 
In the cases where it applies (where its antecedent is satisfied), Rabinowicz's principle can be paraphrased as follows: ``Suppose it's required that, \emph{if} the agent's preferences are complete, \emph{then} they include preference $x$. Then preference $x$ is required \textit{simpliciter}.'' Why accept this principle, if completeness itself is not required? There are many optional features of a preference relation such that any permissible preferences with that feature would include some particular preference. But if the feature itself is optional, then that particular preference may be too.}

\textcolor{black}{In short, we find Rabinowicz's principle of Well-Roundedness less compelling than the premises of our argument for Negative Dominance. More generally, Supervaluational Negative Dominance is too weak to capture the core motivations for Negative Dominance.}

\section{Conclusion}
\label{s:conclusion}

Against weak background assumptions, Negative Dominance and Personal Good together rule out 
\textcolor{black}{incompleteness} in the personal value of lives and the impersonal value of finite outcomes and lotteries. We see this as a compelling argument against those forms of \textcolor{black}{incompleteness}. Of course, as with any deductive argument, some might find the considerations opposing the conclusion so strong that they find it better to reject an initially plausible premise than to accept the conclusion. 
But even those who respond in this way to our arguments will have learned something surprising. 
If they reject Negative Dominance, they will 
have to give up the very natural idea that the comparative value of lotteries must be explained at least in part by the comparative value of their individual outcomes. And they will have to reject the stronger principle of Statewise Negative Dominance, 
which has figured centrally in the approaches to risk advocated by many fans of incompleteness
\citep{schoenfield2014decision,bales2014decision,doodyms}. On the other hand, if they reject Personal Good (even when restricted to situations where equality and diversity are not at stake), they will be giving up even the most minimal connection between the overall value of lotteries and their value for particular individuals. 

\textcolor{black}{More generally, our arguments show that proponents of standard forms of evaluative incompleteness must choose between the outcome-based perspective on risky choices represented by Negative Dominance, and the ``individualist'' perspective represented by Personal Good. These perspectives conflict in infinite contexts even without incompleteness. But one might have hoped to confine such conflicts to the paradox-ridden domain of infinity. Incompleteness forces us to abandon one of these perspectives even for strictly finite choices. And, we have argued, this is a substantial  cost.}\footnote{An earlier version of this paper was circulated under the title ``Share the Sugar''. For helpful discussion and feedback, we are grateful to Ralf Bader, Pietro Cibinel, Will Combs, Johan Gustafsson, Martin Peterson, Ralph Wedgwood, two anonymous referees, and audiences at the Global Priorities Institute, the 2024 meeting of the Philosophy, Politics, and Economics Society, Texas A\&M, the University of Toronto, and the University of Southern California.}

\begin{small}
\begin{singlespacing}

\bibliographystyle{chicago-custom}
\bibliography{references}
\end{singlespacing}

\end{small}

\end{document}